\documentclass[conference, 10pt, letterpaper]{ieeeconf}
\IEEEoverridecommandlockouts
\overrideIEEEmargins
\pdfminorversion=4
\pdfoutput=1

\makeatletter

\let\proof\@undefined
\let\endproof\@undefined
\makeatother

\usepackage{algorithm}
\usepackage{amsmath,amssymb,amsthm}
\usepackage[font=small]{caption}
\usepackage[font=footnotesize]{subcaption}
\usepackage{booktabs}
\usepackage{multirow}
%% Theorems 
\setlength{\topsep}{3pt}
\newtheorem{theorem}{Theorem}[section]
\newtheorem{proposition}[theorem]{Proposition}
\newtheorem{corollary}[theorem]{Corollary}
\newtheorem{lemma}[theorem]{Lemma}

\theoremstyle{definition}

\newtheorem{problem}[theorem]{Problem}

\theoremstyle{remark}
\newtheorem{remark}[theorem]{Remark}

\usepackage{epstopdf}
\usepackage{graphicx,url}
\usepackage{epstopdf}
\usepackage{color}
\usepackage{float}
\usepackage{comment}
\usepackage{mathtools}

\usepackage{algpseudocode}

%% Enumerate environment

\newcommand\oprocendsymbol{\hbox{$\bullet$}}
\newcommand\oprocend{\relax\ifmmode\else\unskip\hfill\fi\oprocendsymbol}

\urldef{\smith}\url{stephen.smith@uwaterloo.ca}
\urldef{\armin}\url{a6sadeghiyengejeh@uwaterloo.ca}

\title{\Large \textbf{On Efficient Computation of Shortest Dubins Paths \\Through Three Consecutive Points}}
\author{Armin Sadeghi \qquad Stephen L. Smith \thanks{This research is partially
    supported by the Natural Sciences and Engineering Research Council
    of Canada (NSERC). }
  \thanks{The authors are with the Department of Electrical and
    Computer Engineering, University of Waterloo, Waterloo ON, N2L 3G1
    Canada  (\armin; \smith) }}

\begin{document}
\maketitle

\begin{abstract}

In this paper, we address the problem of computing optimal paths through three consecutive points for the curvature-constrained forward moving Dubins vehicle. Given initial and final configurations of the Dubins vehicle, and a midpoint with an unconstrained heading, the objective is to compute the midpoint heading that minimizes the total Dubins path length. We provide a novel geometrical analysis of the optimal path, and establish new properties of the optimal Dubins' path through three points. We then show how our method can be used to quickly refine Dubins TSP tours produced using state-of-the-art techniques.  We also provide extensive simulation results showing the improvement of the proposed approach in both runtime and solution quality over the conventional method of uniform discretization of the heading at the mid-point, followed by solving the minimum Dubins path for each discrete heading.
\end{abstract}

\section{Introduction}\label{sec:intro}
Routing problems for non-holonomic vehicles have been studied extensively in the fields of robotics and autonomous systems~\cite{isaacs2011algorithms,tang2005motion,isaiah2015motion, herisse2013shortest,bullo2011dynamic}. The non-holonomic motion of a forward-moving Dubins vehicle with bounded turning radius \cite{dubins1957curves} is commonly studied as a model for fixed-wing aerial vehicles. A configuration of a Dubins vehicle consists of a location $(x,y)$ in the Euclidean plane and a heading $\alpha \in [0, 2\pi)$. The motion of the Dubins' vehicle with minimum-turning radius $R_{\min}$ and control input $u \in \left[-1/R_{\min}, 1/R_{\min}\right]$ is governed by the following equations: 
\[
\dot x = \cos\alpha, \quad \dot y = \sin\alpha, \quad \dot \alpha = u. 
\]

Dubins \cite{dubins1957curves} provided the set of candidate optimal paths between pair-wise configurations of the Dubins vehicle.

In this paper, we focus on the Dubins path problem between three consecutive points, where headings at only the initial and final point are fixed. Our interest in this problem stems from two applications.  First, given a Dubins path through a set of points, a fast solution to this problem provides a method for inserting a new point into the Dubins path with minimum additional cost. Second, we show how it can be used as a tool to perform repeated local optimizations on a Dubins path through a set of points.

\emph{Related work:}  
Ma \emph{et al.} \cite{ma2006receding} study the optimal Dubins paths for three consecutive points where the initial heading is fixed and the midpoint and final point have free headings, building on the optimal control results in~\cite{sussmann1991shortest}. Under the assumption that the pairwise Euclidean distance between all points is at least $2 R_{\min}$, the authors provide a sufficient condition for the optimal path between the points. In addition, a receding horizon algorithm is proposed to construct feasible Dubins path on an ordered set of points. Although the optimal control approaches provide useful optimality conditions on the Dubins paths through three points, they do not directly provide an efficient method of finding the path that satisfies the conditions.

The authors of \cite{goaoc2013bounded} formulate a family of convex optimization 
sub-problems to address the problem of the optimal Dubins path between a set of $n$ ordered points with distance at least $4 R_{\min}$ apart. The drawback of the approach is that the number of convex optimization sub-problems can grow to $2^{(n-2)}$ in the worst case. Their approach provides a solution to the three-point Dubins problem, but it requires solving several convex optimization problems. A heuristic was recently proposed~\cite{vana2015dubins} to extend this method to the problem of Dubins paths through neighborhoods.

Another closely related problem is the Dubins TSP, where given $n$ points, the objective is to sequence the points and choose a heading at each point such that the resulting Dubins tour length is minimum. In \cite{savla2008traveling,rathinam2007resource,macharet2014orientation}, approximation algorithms are proposed to assign headings to the points given the optimal ordering of the Euclidean TSP problem on the same set of points. In \cite{macharet2014orientation}, the headings are assigned by a heuristic solution to the three-point Dubins problem considered in this paper. This heuristic approach is adopted in \cite{macharet2013efficient} to insert points into the tours of multiple-Dubins vehicles.  

In \cite{le2012dubins}, the continuous interval of headings at each point is approximated by a finite number of samples. Each sample, along with the position of the corresponding point forms a configuration, and the problem reduces to computing a generalized traveling salesman problem (GTSP) tour that visits one configuration for each point.  The authors in \cite{cons2014integrating} present an experimental comparison of Dubins TSP algorithms including the GTSP approach. 
Recently and built on the results for the pairwise optimal Dubins interval path, Manyam and Rathinam~\cite{manyam2015tightly} proposed a Dubins TSP algorithm based on uniform discretization of the headings at each point to intervals. 

\emph{Contributions:} The focus of this paper is to provide an efficient method for computing the optimal Dubins path between three consecutive points.  We present a novel analysis of the problem that relies on inversive geometry, and results in a set of equations defining the optimal heading at the mid-point. We  provide a simple method to approximate the optimal heading, and give bounds on its worst-case deviation from optimal. We then present an iterative method that is guaranteed to converge to the optimal solution. In simulation, we compare our approach to the uniform discretization method of~\cite{le2012dubins} in both solution quality and computation time. Finally, we show that a Dubins TSP can be solved using a coarse heading discretization followed by repeated heading optimization using our technique to achieve high-quality tours in approximately $8\%$ of the computation time.

The paper is organized as follows.  In Section~\ref{sec:background}, we provide background on inversive geometry. In Section~\ref{sec:Formulation}, we formulate the Dubins path problem through three consecutive points. Section~\ref{sec:inversion_proof} derives equations for the optimal path through three points. In Section~\ref{sec:Optimal CSCSC Paths} we present an approximation method for the heading at the mid-point and an exact algorithm to obtain the optimal heading. In Section~\ref{sec:simulations}, we provide applications and benchmarking results.

\section{Preliminaries} \label{sec:background}
Here we provide a brief background on circle inversion~\cite{henderson2000experiencing}.  In two dimensional geometry, circle inversion is a mapping of a geometric object $Q$  with respect to a circle $\mathcal{C} = \mathrm{circle}(O, R)$ to another object $\mathrm{inv}(Q, \mathcal{C})$. The inverse of a point $P$ with respect to $\mathcal{C}$ is a point on the segment $\overline{OP}$ with distance $\frac{R^2}{|OP|}$ from $P$ (see Figure~\ref{fig:inversion_def}). The inverse of a line (resp. circle) with respect to circle $\mathcal{C}$ is a circle, unless the line (resp. circle) contains $O$, in which case the inverse is a line. The inverse of a line (resp. circle) is obtained by inverting three points on the line (resp. circle).  With a slight abuse of terminology, we define inverse of a line segment $S$ with respect to $\mathcal{C}$ to be the inverse of the infinite line containing the line segment $S$ with respect to  $\mathcal{C}$.

The angle between a circle and an intersecting line is defined as the angle between the line and the tangent to the circle at the intersection point. The angles between the intersecting lines and circles are preserved under the circle inversion operation.

\begin{figure}
\centering
        \includegraphics[width=0.55\linewidth, keepaspectratio=true]{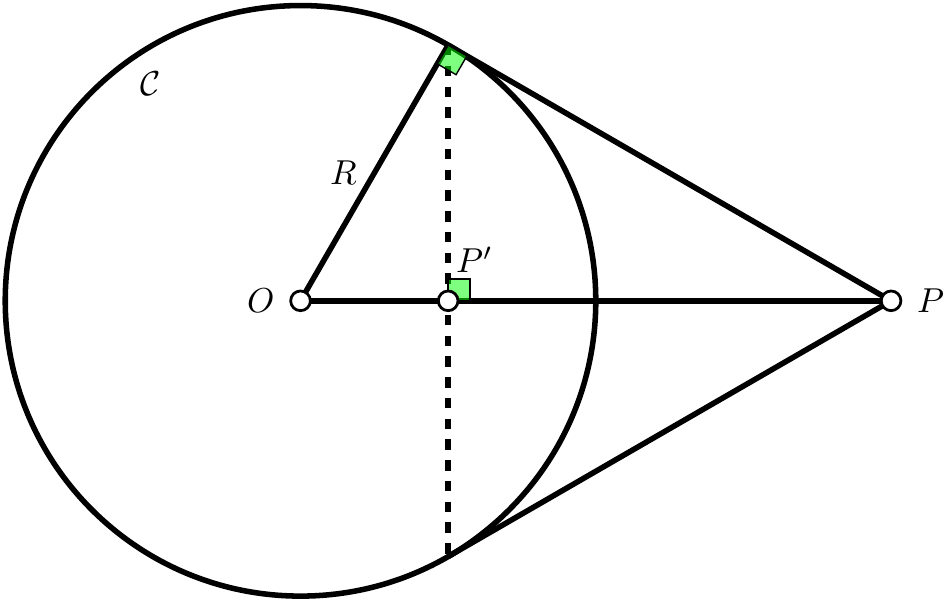}
        \caption{Point $P'$ is the inverse of point $P$ with respect to circle $\mathcal{C} = \mathrm{circle}(O, R)$.}
        \label{fig:inversion_def}
\end{figure}

\section{ Problem Formulation} 
\label{sec:Formulation}
We now formulate the problem of finding an optimal path for a Dubins vehicle between three consecutive points. 

\subsection{Three-Point Dubins Path}

Let the tuple $X_i = (\mathbf{x}_i, \alpha_i)$ denote a Dubins vehicle configuration, consisting of a point $\mathbf{x}_i$ in the Euclidean plane, and a heading $\alpha_i \in [0,2\pi)$ at $\mathbf{x}_i$. An alternative representation of the heading at $\mathbf{x}_i$ is two circular arcs (left and right turns) containing $\mathbf{x}_i$ and tangent to the heading. Given initial and final configurations $X_i$ and $X_f$, along with a midpoint $\mathbf{x}_m$ with free heading, the three-point problem is defined by the tuple $(X_i, \mathbf{x}_m, X_f)$ and stated as follows.

\begin{problem}[Three-point Dubins path]  \label{prob:three_point}
Given a tuple $(X_i, \mathbf{x}_m, X_f)$, with pairwise Euclidean distances between the points $\mathbf{x}_i$, $\mathbf{x}_m$, $\mathbf{x}_f$ of at least $4R_{\min}$, find a heading $\alpha_m$ at $\mathbf{x}_m$ such that the length of an optimal Dubins path starting at $X_i$, passing through $X_m = (\mathbf{x}_m, \alpha_m)$, and arriving at $X_f$ is minimum.
\end{problem}

From the Bellman's principle of optimality \cite{bellman1956dynamic}, the optimal Dubins path through three configurations is obtained by concatenating two optimal Dubins paths between the  pairs. Given two configurations $X_1$ and $X_2$, the optimal Dubins path from $X_1$ to $X_2$ can be computed in constant time \cite{dubins1957curves}. 
The optimal Dubins paths between two configurations is in the set $\{CCC, CSC\}$ where $S$ is a straight line segment and $C$ is a circular turn with minimum turning radius in either left $L$ or right $R$ direction. Therefore, in general the optimal path through three points is obtained by concatenating two Dubins paths as follows.
\begin{align*}
\{(C_1C_2C_3)_1(C_4C_5C_6)_2,(C_1C_2C_3)_1(C_4S_5C_6)_2,\\(C_1S_2C_3)_1(C_4C_5C_6)_2,(C_1S_2C_3)_1(C_4S_5C_6)_2\}.
\end{align*}
From \cite{goaoc2013bounded} the set of optimal Dubins paths under $4R_{\min}$ distance assumption of Problem \ref{prob:three_point} is reduced to $(C_1S_2C_3)_1(C_4S_5C_6)_2$ . The $4R_{\min}$ distance constraint is relaxed further in Section \ref{sec:simulations}.

\subsection{Properties of Three-Point Dubins Path}
In a path of type $(C_1S_2C_3)_1(C_4S_5C_6)_2$, the arc segments $C_3$ and $C_4$ are the two incident path segments to the mid-point. In the optimal solution to Problem \ref{prob:three_point}, the two arcs incident to the mid-point have equal lengths and both are in the same turning direction i.e., left turn or right turn \cite{goaoc2013bounded}. Thus for simplicity we represent the path as  $C_1S_2C_3S_4C_5$. We summarize the properties of the optimal Dubins path through three consecutive points, provided in \cite{goaoc2013bounded}, as follows.

\begin{lemma}[Three-point Dubins] \label{lem:bisect}
Given $(X_i, \mathbf{x}_m, X_f)$, in a shortest path of type $C_1S_2C_3S_4C_5$, the line segment between $\mathbf{x}_m$ and the center of the circle associated with the optimal heading bisects the angle between the line segments $S_2$ and $S_4$.
\end{lemma}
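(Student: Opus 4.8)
The plan is to exploit the local optimality of the mid-point heading $\alpha_m$, treating the two arcs incident to $\mathbf{x}_m$ as a single arc (justified by the cited result of \cite{goaoc2013bounded} that the two incident arcs have equal length and common turning direction, so the path is $C_1S_2C_3S_4C_5$ with $C_3S_4C_5$ replaceable by a combined description). Let $\mathbf{c}(\alpha_m)$ denote the center of the turning circle at $\mathbf{x}_m$ corresponding to heading $\alpha_m$; as $\alpha_m$ varies over $[0,2\pi)$, the point $\mathbf{c}(\alpha_m)$ traces the circle of radius $R_{\min}$ about $\mathbf{x}_m$. The total path length decomposes as $L(\alpha_m) = L_1(\alpha_m) + L_2(\alpha_m)$, where $L_1$ is the length of the optimal $C_1S_2C_3$ path from $X_i$ to $(\mathbf{x}_m,\alpha_m)$ and $L_2$ is the length of the optimal $C_4 S_5 C_6$ (i.e.\ $S_4C_5$) path from $(\mathbf{x}_m,\alpha_m)$ to $X_f$. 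First I would derive the derivative of each piece with respect to $\alpha_m$.

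The key geometric fact I would use is the standard first-variation formula for the length of an optimal Dubins path as a function of its terminal configuration: moving the terminal turning-circle center infinitesimally changes the path length, to first order, only through the component of that motion along the straight segment $S$ incident to the terminal circle (the arc-length contributions cancel because the tangency condition is maintained). Concretely, if the terminal circle center moves with velocity $\mathbf{v}$, then $\frac{d}{dt}(\text{length}) = \langle \mathbf{v}, \hat{\mathbf{s}}\rangle$ where $\hat{\mathbf{s}}$ is the unit vector along $S$ pointing away from the terminal circle. Here, as $\alpha_m$ increases, $\mathbf{c}(\alpha_m)$ moves with velocity $R_{\min}$ times the unit tangent to the radius-$R_{\min}$ circle at $\mathbf{x}_m$, i.e.\ perpendicular to $\overline{\mathbf{x}_m \mathbf{c}(\alpha_m)}$. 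Applying the first-variation formula to $L_1$ (whose terminal segment is $S_2$) and to $L_2$ (whose initial segment is $S_4$, oriented away from $\mathbf{c}$ in the reverse-time sense), I get $\frac{dL}{d\alpha_m} = R_{\min}\big(\langle \mathbf{t}, \hat{\mathbf{s}}_2\rangle + \langle \mathbf{t}, \hat{\mathbf{s}}_4\rangle\big)$ where $\mathbf{t}\perp \overline{\mathbf{x}_m\mathbf{c}}$ and $\hat{\mathbf{s}}_2,\hat{\mathbf{s}}_4$ are the unit vectors along $S_2,S_4$ directed into $\mathbf{c}$. Setting this to zero at the optimum forces $\langle \mathbf{t}, \hat{\mathbf{s}}_2 + \hat{\mathbf{s}}_4\rangle = 0$, so $\hat{\mathbf{s}}_2 + \hat{\mathbf{s}}_4$ is parallel to $\overline{\mathbf{x}_m \mathbf{c}}$; since $\hat{\mathbf{s}}_2+\hat{\mathbf{s}}_4$ points along the internal angle bisector of the two segments, $\overline{\mathbf{x}_m\mathbf{c}}$ bisects the angle between $S_2$ and $S_4$, which is the claim.

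I would also handle the boundary/degeneracy issue: the $4R_{\min}$ separation assumption guarantees that for every $\alpha_m$ the optimal sub-paths are genuinely of type $CSC$ with nonzero straight segments, so $L(\alpha_m)$ is differentiable and the stationarity argument is valid at an interior minimum (and a minimum exists by compactness of $[0,2\pi)$ and continuity of $L$). The main obstacle I anticipate is justifying the first-variation formula rigorously — i.e.\ showing that the arc-length terms contribute nothing to first order and that the only surviving term is the sliding of the endpoint along the straight segment. This is intuitively the ``sliding tangent line'' picture, but making it precise requires differentiating the closed-form $CSC$ length expression (or invoking an envelope/Pontryagin argument) and carefully tracking signs and orientations for the two sub-paths, one of which is naturally parametrized backward from $\mathbf{x}_m$. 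Once that lemma is in hand, the rest is the short inner-product computation above.
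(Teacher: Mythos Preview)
The paper does not actually prove Lemma~\ref{lem:bisect}; it is quoted from \cite{goaoc2013bounded} (and the paper remarks that the same conclusion also follows from the argument in \cite{ma2006receding}). So there is no in-paper proof to match against, and your variational approach is a perfectly reasonable way to supply one: differentiate $L(\alpha_m)=L_1(\alpha_m)+L_2(\alpha_m)$, set the derivative to zero, and read off the bisector condition.

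One correction to your first-variation formula. As stated, it is not true that the arc-length contributions cancel within a \emph{single} $CSC$ sub-path when you vary $\alpha_m$ at fixed $\mathbf{x}_m$. For instance, for an $LSL$ sub-path from $X_i$ to $(\mathbf{x}_m,\alpha_m)$ one has $\theta_1+\theta_3=(\alpha_m-\alpha_i)\bmod 2\pi$, so
\[
\frac{dL_1}{d\alpha_m}=R_{\min}+\Big\langle \hat{\mathbf{s}}_2,\;\frac{d\mathbf{c}}{d\alpha_m}\Big\rangle
= R_{\min}\bigl(1-\langle \hat{\mathbf{s}}_2,\hat{\alpha}_m\rangle\bigr),
\]
with an extra $R_{\min}$ that your formula omits. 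The companion computation for $L_2$ produces $-R_{\min}$, and it is only in the \emph{sum} $L_1+L_2$ that these arc terms cancel, leaving exactly your inner-product expression $\langle \mathbf{t},\hat{\mathbf{s}}_2+\hat{\mathbf{s}}_4\rangle$. So the cancellation is across the two sub-paths (reflecting that the middle arc is traversed once, with the split point $\mathbf{x}_m$ immaterial to the total), not within each one. You already anticipated having to ``differentiate the closed-form $CSC$ length expression'' to justify this step; when you do, this is the mechanism you will see, and after that your stationarity computation and the bisector conclusion go through unchanged. The same cancellation occurs for the mixed types ($RSL$, $LSR$), though there the straight-segment length is $\sqrt{|A\mathbf{c}|^2-4R_{\min}^2}$ and the bookkeeping of signs needs a little extra care.
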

Substituting the left $L$ and right $R$ turns for each $C_i$ in the path $C_1S_2C_3S_4C_5$, we obtain the set of 8 candidate optimal path types for Problem \ref{prob:three_point}.

In \cite{ma2006receding}, the authors address a variation of Problem \ref{prob:three_point} in which the final heading is also free, i.e., the problem $(X_i, \mathbf{x}_m, \mathbf{x}_f)$.  The authors show that under a $2R_{\min}$ distance constraint the optimal path is of type $C_1S_2C_3S_4$, and the optimal heading bisects the angle between $S_2$ and $S_4$ as in Lemma \ref{lem:bisect}. Limiting the path types in the problem  $(X_i, \mathbf{x}_m, X_f)$ to $C_1S_2C_3S_4C_5$,  the result of Lemma \ref{lem:bisect} applies even without the $4R_{\min}$ constraint. The proof follows directly from the proof in \cite{ma2006receding} for the problem instance $(X_i, \mathbf{x}_m, \mathbf{x}_f)$.

\section{Optimal Path and Inversive Geometry}
\label{sec:inversion_proof}
In this section we use inversive geometry to establish properties of optimal paths of type $C_1S_2C_3S_4C_5$, that form the basis of our solution approach to Problem~\ref{prob:three_point}.

\subsection{Inversive Geometry in Dubins Paths}
 Figure \ref{fig:inversion_first} shows the optimal path for the case $R_1S_2R_3S_4L_5$. The points $A,B$ and $\mathbf{x}_c$ are the centers of the circles associated with the headings at the points $\mathbf{x}_i, \mathbf{x}_f$ and $\mathbf{x}_m$ respectively. In Figure \ref{fig:inversion_first}, the common tangent of the circles centered at $A$ and $\mathbf{x}_c$ is an \emph{outer-common tangent} and the common tangent of the circles centered at $\mathbf{x}_c$ and $B$ is an \emph{inner-common tangent}.
 
Figure~\ref{fig:inversion_th} shows the inverse of the components of the path with respect to the circle $\mathcal{C}$ centered at $\mathbf{x}_m$ with radius $R_{\min}$. Each $S$ segment in Figure \ref{fig:inversion_first} is shown as a line in Figure \ref{fig:inversion_th}. The circle inversion operation on each line generates a circle containing the mid-point $\mathbf{x}_m$, shown in Figure \ref{fig:inversion_th} in the same color. The inverse of the circle associated with the heading at $\mathbf{x}_m$ is a line passing through the two intersection points of $\mathrm{circle}(\mathbf{x}_m, R_{\min})$ and $\mathrm{circle}(\mathbf{x}_c, R_{\min})$.

\begin{figure}
\centering
        \includegraphics[width=0.55\linewidth, keepaspectratio=true]{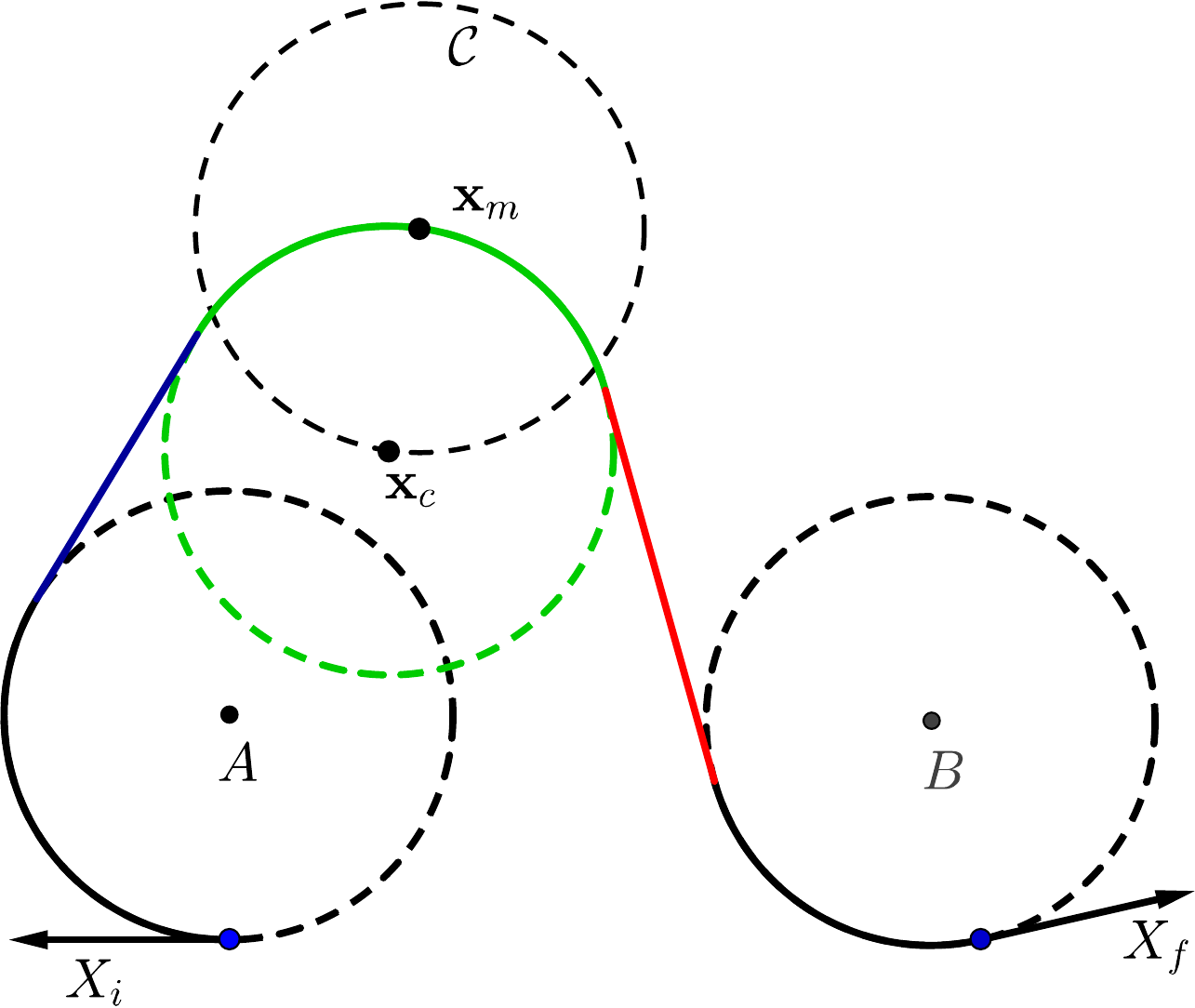}
        \caption{An optimal path of type $R_1S_2R_3S_4L_5$. Each component of the path is sketched
in different colors.}
        \label{fig:inversion_first}
\end{figure}
\begin{figure}
\centering
        \includegraphics[width=0.55\linewidth, keepaspectratio=true]{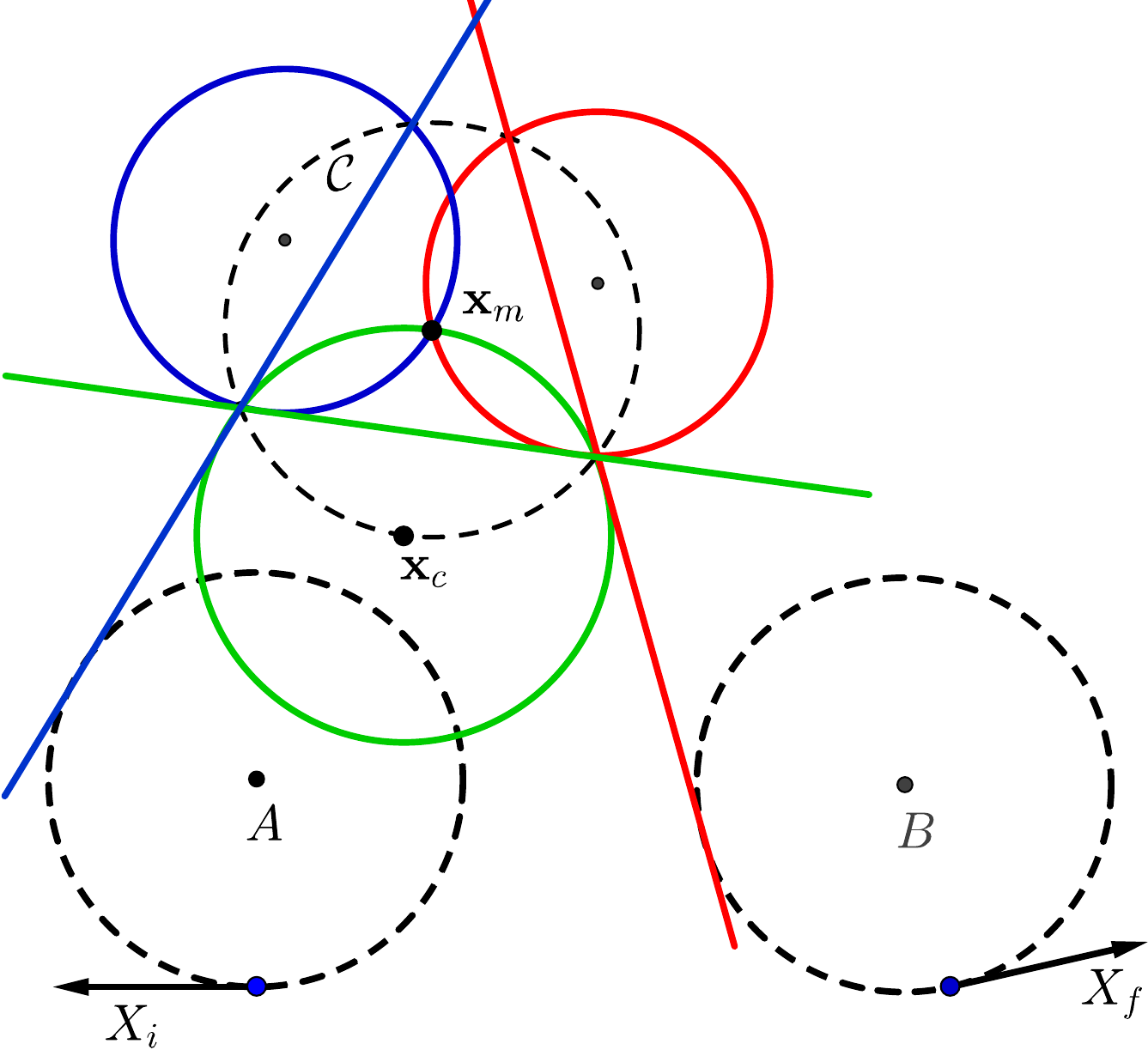}
        \caption{Inverse of the components of the path $R_1S_2R_3S_4L_5$ with respect to the
dashed circle $\mathcal{C} = \mathrm{circle}(\mathbf{x}_m, R_{\min})$.}
         \label{fig:inversion_th}
\end{figure}

The following lemma provides a sufficient condition for optimality of a $C_1S_2C_3S_4C_5$ path based on Lemma~\ref{lem:bisect}.
\begin{lemma}[Radius of inverted circles in an optimal path]
\label{lem:Radius_inverted_circles}
In any optimal path of type $C_1S_2C_3S_4C_5$, the inverses of the line segments,
$S_2$ and $S_4$, with respect to a circle centered at $\mathbf{x}_m$ with radius $R_{\min}$ are two circles of equal radius.
\end{lemma}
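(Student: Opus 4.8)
The strategy is to express the radius of the inverse of a line purely in terms of its distance to the center of inversion, and thereby reduce the claim to showing that $\mathbf{x}_m$ is equidistant from the lines containing $S_2$ and $S_4$; this last fact is exactly what Lemma~\ref{lem:bisect} provides.

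First I would record the following elementary fact about circle inversion. Let $\ell$ be a line not through the center $O$ of the inversion circle $\mathcal{C} = \mathrm{circle}(O,R)$, let $F$ be the foot of the perpendicular from $O$ to $\ell$, and set $d = |OF|$. For any $P\in\ell$ with inverse $P'$, the relation $|OF|\cdot|OF'| = |OP|\cdot|OP'| = R^2$ together with the common angle at $O$ gives the similarity of the triangles $OFP$ and $OP'F'$, so $\angle OP'F' = \angle OFP = \pi/2$; hence $P'$ lies on the circle with diameter $\overline{OF'}$. Since $|OF'| = R^2/d$, the inverse $\mathrm{inv}(\ell,\mathcal{C})$ is a circle through $O$ of radius $R^2/(2d)$.

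Applying this with $O = \mathbf{x}_m$ and $R = R_{\min}$, the inverse circles $\mathrm{inv}(S_2,\mathcal{C})$ and $\mathrm{inv}(S_4,\mathcal{C})$ have radii $R_{\min}^2/(2d_2)$ and $R_{\min}^2/(2d_4)$, where $d_2$ and $d_4$ are the distances from $\mathbf{x}_m$ to the lines containing $S_2$ and $S_4$ respectively. (If one of these lines passes through $\mathbf{x}_m$, the corresponding arc incident to the midpoint degenerates; I would dispatch this case separately.) Thus the lemma is equivalent to the identity $d_2 = d_4$. To obtain it, note that by Lemma~\ref{lem:bisect} the segment $\overline{\mathbf{x}_m\mathbf{x}_c}$ lies on the bisector of the angle formed by the lines containing $S_2$ and $S_4$, so $\mathbf{x}_m$ lies on that bisector; since every point of an angle bisector of two lines is equidistant from them (and in the degenerate case $S_2\parallel S_4$ the same holds for the midline), we conclude $d_2 = d_4$.

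The argument has no serious obstacle: the substantive geometric input is supplied verbatim by Lemma~\ref{lem:bisect}, and the rest is the standard inversion computation. The only care needed is in the routine verification of the radius formula via the triangle similarity and in ruling out or separately handling the degenerate configurations ($S_2$ or $S_4$ through $\mathbf{x}_m$, or $S_2\parallel S_4$).
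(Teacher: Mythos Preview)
Your argument is correct and, in fact, more economical than the paper's. Both proofs draw the essential input from Lemma~\ref{lem:bisect}, but you use it differently. You first derive the standard closed-form expression for the radius of the inverse of a line, namely $R_{\min}^2/(2d)$ where $d$ is the distance of the line from $\mathbf{x}_m$; the lemma then reduces to the elementary fact that a point on an angle bisector is equidistant from the two lines, which Lemma~\ref{lem:bisect} supplies since $\mathbf{x}_m$ itself lies on the bisector $\overline{\mathbf{x}_m\mathbf{x}_c}$. The paper, by contrast, works entirely on the image side: it introduces the intersection $P$ of $S_2$ and $S_4$ and its inverse $P'$, invokes the conformality of inversion to transfer the equal bisected angles at $P$ to equal angles between the line $\overline{P'\mathbf{x}_m}$ and the two image circles at $P'$, and then finishes with a congruent-triangles argument on $\triangle P'CQ$ and $\triangle P'DQ$ (with $C,D$ the image-circle centers) to conclude equal radii. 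Your route avoids the angle-preservation machinery and the synthetic triangle chase; the paper's route avoids the explicit radius formula and stays within the inversive-geometry idiom that the subsequent sections exploit. Your handling of the degenerate parallel case matches what the paper treats separately in its ``Special Cases'' subsection.
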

\begin{proof}
Consider any optimal path of type $C_1S_2C_3S_4C_5$ --- for such a path we can define the
following quantities as shown in Figure \ref{fig:inversion_proof}. Let $P$ be the intersection of the two line
segments in the optimal path and $P'$ be the inverse of $P$ with respect to $\mathcal{C}$. Note that $P'$ is the inverse of the point $P$ and the line $\overline{PP'}$ contains
$\mathbf{x}_m$ by the definition of $P'$, thus the inverse of $\overline{PP'}$ with respect to $\mathcal{C}$ is the line itself. The point $P$ is common in both lines $S_2$ and $S_4$, thus the point $P'$ lies on both circles $\mathrm{inv}(S_2, \mathcal{C})$ and $ \mathrm{inv}(S_4, \mathcal{C})$. From Lemma \ref{lem:bisect} we know that the line $\overline{PP'}$ is the bisector of the angle between two line segments, $S_2$ and $S_4$. From Section \ref{sec:background} circle inversion preserves the angle between $\overline{PP'}$ and $S_2$ and angle between $\overline{PP'}$ and $S_4$. Therefore, the angles between the line $\overline{PP'}$ and circles  $\mathrm{inv}(S_2, \mathcal{C})$ and $ \mathrm{inv}(S_4, \mathcal{C})$ are equal. Thus we have, $\angle P'C\mathbf{x}_m = \angle P'D\mathbf{x}_m$,  which implies
\[
    \angle P'DC = \frac{\angle P'C\mathbf{x}_m}{2}=\frac{\angle P'D\mathbf{x}_m}{2}= \angle P'CD.
\]
Let $Q$ be the intersection of the lines $\overline{CD}$ and $\overline{P'\mathbf{x}_m}$. The points $\mathbf{x}_m$ and $P'$ are common in both circles $\mathrm{inv}(S_2, \mathcal{C})$ and $ \mathrm{inv}(S_4, \mathcal{C})$, implying $\angle P'QC = \angle P'QD = \frac{\pi}{2} $. Thus the triangles $\triangle P'CQ$ and $\triangle P'DQ$ are
equal given that they have common side $P'Q$ and two equal angles. Therefore, the
segments $CP'$ and $DP'$ have equal length, implying that the circles have same radius.
\end{proof}

\begin{figure}
\centering
\includegraphics[width=0.8\linewidth, keepaspectratio=true]{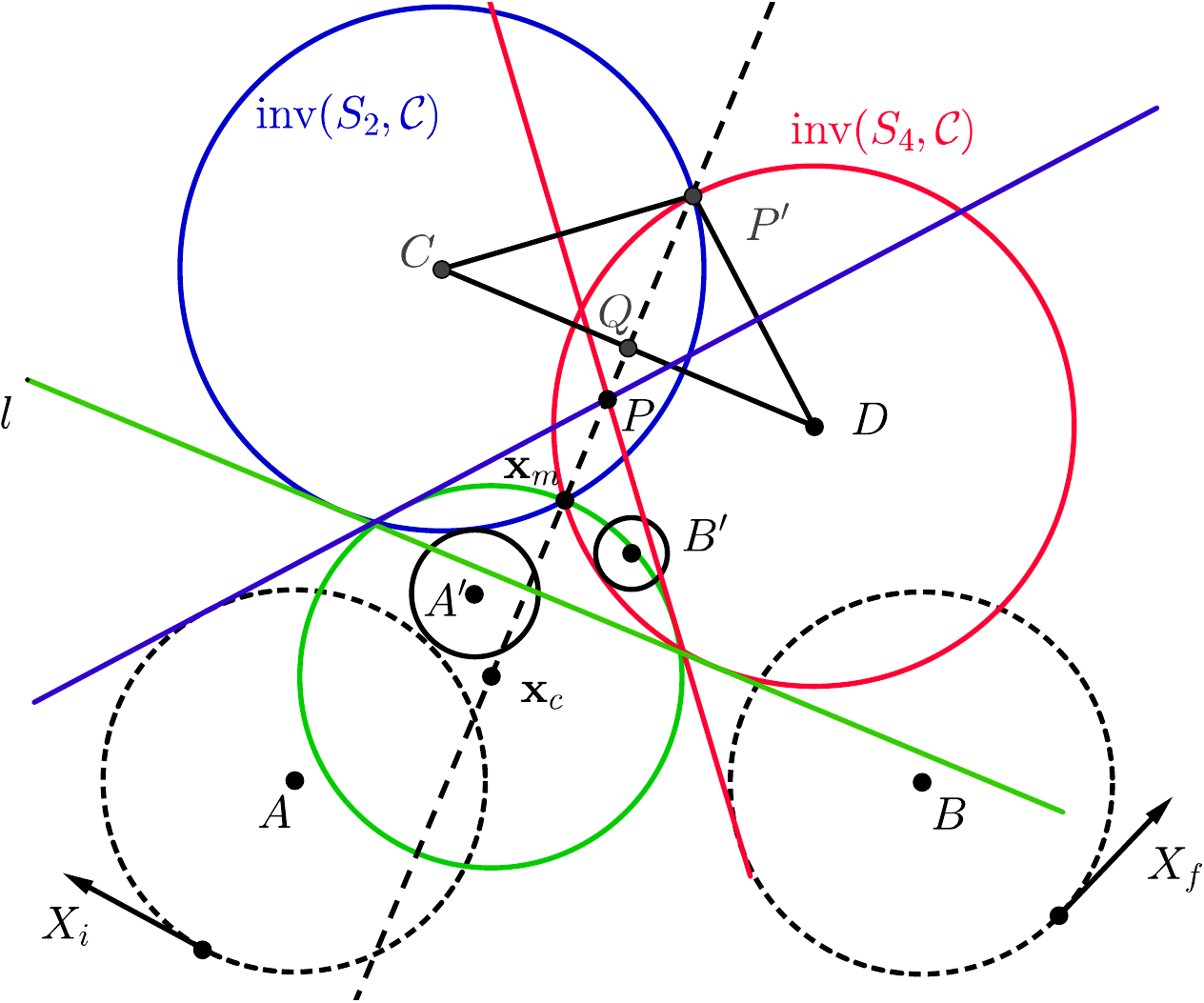}
\caption{Path $R_1S_2R_3S_4L_5$ and inverse of the path components. The optimal path given in the
figure with initial state $X_i$, final state $X_f$ and point $\mathbf{x}_m$ to visit.}
\label{fig:inversion_proof}
\end{figure}

\subsection{Optimality Condition}
Without loss of generality we set $R_{\min}$ to 1 in the rest of the paper, otherwise we scale the location of the points to satisfy the assumption. In addition, we rotate the coordinate system such that the centers $A$ and $B$ lie on the $x$-axis. Then, the optimal heading at $\mathbf{x}_m$ equals the angle between the line tangent to $\mathrm{circle}(\mathbf{x}_c, R_{\min})$ at $\mathbf{x}_m$ and the $x$-axis. 

Due to the $4R_{\min}$ distance constraint on the points, $\mathrm{circle}(A, R_{\min})$ does not contain $\mathbf{x}_m$. Therefore, the inverse of $\mathrm{circle}(A, R_{\min})$, i.e.,$\mathrm{inv}(\mathrm{circle}(A,R_{\min}), \mathcal{C})$, is a circle centered at point $A'$ with radius $r_{A'}$ (see Figure \ref{fig:inversion_proof}). The point $A'$ and radius $r_{A'}$ are defined as follows:
\begin{equation}\label{eq:inversion_eq}
r_{A'} = \frac{1}{|\overline{A\mathbf{x}_m}|^2 - 1}, \quad |\overline{A'\mathbf{x}_m}| = |\overline{A\mathbf{x}_m}|r_{A'}.
\end{equation}
Substituting $A, A'$ and $r_{A'}$ with $B, B'$ and $r_{B'}$, respectively, we can define $B'$ and $r_{B'}$.

To derive a set of equations for the optimal heading in the path $C_1S_2C_3S_4C_5$, we require the following additional definitions:
\begin{itemize}
\item $\mu_A$ is $1$ if $C_1 = C_3$ and $-1$ otherwise,
\item $\mu_B$ is $1$ if $C_5 = C_3$ and $-1$ otherwise,
\item $R$ is the radius of the circles centered at $C$ and $D$ in the optimal path,
\item $\theta 	= \angle \mathbf{x}_mCD = \angle \mathbf{x}_mDC$ (see Figure \ref{fig:inversion_proof}),
\item $\beta_1 	= \angle \mathbf{x}_mAB$ and $\beta_2 = \angle \mathbf{x}_mBA$. 
\end{itemize}
Following proposition provide the set of equations to obtain the optimal heading.
\begin{proposition}
\label{prop:inversion_proof}
The optimal heading $\alpha^*$ at $\mathbf{x}_m$ is the unique solution to the following set of equations:
\begin{align}
\frac{1}{2(\mu_A + |\overline{A\mathbf{x}_m}|\cos(\beta_1 + \theta - \alpha^*))} =R, \label{eq:trig_1}\\
\frac{1}{2(\mu_B + |\overline{B\mathbf{x}_m}|\cos(\beta_2 + \theta + \alpha^*))} =R, \label{eq:trig_2}\\
\frac{1}{2(1 - \sin(\theta))} =R.\label{eq:trig_3}
\end{align}
\end{proposition}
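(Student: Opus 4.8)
The plan is to read each of \eqref{eq:trig_1}--\eqref{eq:trig_3} as a geometric identity in the inverted figure (Figure~\ref{fig:inversion_proof}) and verify it. Let $\mathcal{C}=\mathrm{circle}(\mathbf{x}_m,1)$; by Lemma~\ref{lem:Radius_inverted_circles} the images $\mathrm{inv}(S_2,\mathcal{C})$ and $\mathrm{inv}(S_4,\mathcal{C})$ are circles of a common radius $R$ centred at $C$ and $D$, each passing through $\mathbf{x}_m$, and we write $\ell_c:=\mathrm{inv}(\mathrm{circle}(\mathbf{x}_c,1),\mathcal{C})$. Two observations drive the proof. First, as noted in Section~\ref{sec:inversion_proof}, $\ell_c$ is the common chord line of the two unit circles $\mathcal{C}$ and $\mathrm{circle}(\mathbf{x}_c,1)$, whose centres lie at distance $|\overline{\mathbf{x}_m\mathbf{x}_c}|=1$; for equal radii this chord line is the perpendicular bisector of $\overline{\mathbf{x}_m\mathbf{x}_c}$, so $\ell_c$ is perpendicular to $\overline{\mathbf{x}_m\mathbf{x}_c}$ and at distance $\tfrac12$ from $\mathbf{x}_m$. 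Since the heading line at $\mathbf{x}_m$ is the tangent to $\mathrm{circle}(\mathbf{x}_c,1)$ at $\mathbf{x}_m$ and hence is also perpendicular to $\overline{\mathbf{x}_m\mathbf{x}_c}$, it is parallel to $\ell_c$; thus $\alpha^*$ is exactly the angle $\ell_c$ makes with the $x$-axis. Second, inversion preserves tangency, and under $\mathcal{C}$ we have $\mathrm{circle}(\mathbf{x}_c,1)\mapsto\ell_c$, $\mathrm{circle}(A,1)\mapsto\mathrm{circle}(A',r_{A'})$, $\mathrm{circle}(B,1)\mapsto\mathrm{circle}(B',r_{B'})$; therefore $\mathrm{circle}(C,R)$ and $\mathrm{circle}(D,R)$ are each tangent to $\ell_c$, while in addition $\mathrm{circle}(C,R)$ is tangent to $\mathrm{circle}(A',r_{A'})$ and $\mathrm{circle}(D,R)$ is tangent to $\mathrm{circle}(B',r_{B'})$.

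For \eqref{eq:trig_3}, let $Q$ be the foot of the perpendicular from $\mathbf{x}_m$ onto line $\overline{CD}$, as in the proof of Lemma~\ref{lem:Radius_inverted_circles}; the right triangle $\triangle\mathbf{x}_mQC$ has hypotenuse $|\overline{\mathbf{x}_mC}|=R$ and $\angle\mathbf{x}_mCQ=\theta$, so $|\overline{\mathbf{x}_mQ}|=R\sin\theta$. Since $\mathrm{circle}(C,R)$ is tangent to $\ell_c$, the line $\overline{CD}$ lies at distance $R$ from $\ell_c$; in the optimal inverted configuration (see Figure~\ref{fig:inversion_proof}) $\mathbf{x}_m$ lies between $\ell_c$ and $\overline{CD}$, so the distance from $\mathbf{x}_m$ to $\ell_c$ equals $R-|\overline{\mathbf{x}_mQ}|=R(1-\sin\theta)$. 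Setting this equal to $\tfrac12$ gives \eqref{eq:trig_3}.

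For \eqref{eq:trig_1}--\eqref{eq:trig_2}, apply the law of cosines in $\triangle\mathbf{x}_mCA'$: $|\overline{CA'}|^2=|\overline{\mathbf{x}_mC}|^2+|\overline{\mathbf{x}_mA'}|^2-2|\overline{\mathbf{x}_mC}|\,|\overline{\mathbf{x}_mA'}|\cos\angle C\mathbf{x}_mA'$. Here $|\overline{\mathbf{x}_mC}|=R$, while \eqref{eq:inversion_eq} gives $|\overline{\mathbf{x}_mA'}|=|\overline{A\mathbf{x}_m}|\,r_{A'}$ with $r_{A'}\bigl(|\overline{A\mathbf{x}_m}|^2-1\bigr)=1$, and the tangency of $\mathrm{circle}(C,R)$ with $\mathrm{circle}(A',r_{A'})$ gives $|\overline{CA'}|=R+\mu_A r_{A'}$, where $\mu_A=+1$ for external and $-1$ for internal tangency. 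Substituting, cancelling $R^2$, dividing through by $r_{A'}$, and using $r_{A'}(|\overline{A\mathbf{x}_m}|^2-1)=1$, every remaining $r_{A'}$ term and every $|\overline{A\mathbf{x}_m}|^2$ term cancels, and one obtains
\[
\frac{1}{2R}=\mu_A+|\overline{A\mathbf{x}_m}|\cos\angle C\mathbf{x}_mA'.
\]
It remains to identify $\angle C\mathbf{x}_mA'$ as $\beta_1+\theta-\alpha^*$ up to sign: the ray $\mathbf{x}_m\!\to\!A'$ has the direction of $\mathbf{x}_m\!\to\!A$, which by $\beta_1=\angle\mathbf{x}_mAB$ (with $A,B$ on the $x$-axis) is fixed by $\beta_1$ up to a half-turn, while the ray $\mathbf{x}_m\!\to\!C$ makes angle $\theta$ with $\ell_c$ (isosceles $\triangle\mathbf{x}_mCD$) and hence angle $\alpha^*\pm\theta$ with the $x$-axis; an angle chase in Figure~\ref{fig:inversion_proof} pins down the combination, and $\cos$ being even removes the sign ambiguity, yielding \eqref{eq:trig_1}. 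Repeating the argument for $\triangle\mathbf{x}_mDB'$ --- with $S_4$ on the side of the heading opposite to $S_2$ --- produces \eqref{eq:trig_2} with $+\alpha^*$; the asserted values of $\mu_A,\mu_B$ (same vs.\ opposite turning directions) are exactly the external/internal tangency signs, inherited from whether $S_2$, resp.\ $S_4$, is an outer or inner common tangent of its pair of unit circles.

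The main obstacle is this last orientation bookkeeping: confirming, across the eight turning patterns (and the two sides of line $\overline{AB}$ on which $\mathbf{x}_m$ can lie), that the angle entering the law of cosines is precisely $\beta_1+\theta-\alpha^*$, that in \eqref{eq:trig_2} it is $\beta_2+\theta+\alpha^*$, and that $\mu_A,\mu_B$ carry the right signs. For uniqueness: \eqref{eq:trig_3} makes $R$ strictly increasing in $\theta$ on $[0,\tfrac{\pi}{2})$, so $\theta$ can be eliminated; solving \eqref{eq:trig_1} and \eqref{eq:trig_2} for $\alpha^*$ then expresses $\alpha^*$ as two explicit functions of $R$, and equating them reduces the system to a single scalar equation whose two sides are monotone in $R$ over the admissible range, so there is exactly one solution, which by the necessary condition of Lemma~\ref{lem:bisect} is the optimal heading.
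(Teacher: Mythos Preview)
Your approach is essentially the paper's: invert through $\mathcal{C}=\mathrm{circle}(\mathbf{x}_m,1)$, use that $\mathrm{inv}(S_2,\mathcal{C})$ and $\mathrm{inv}(S_4,\mathcal{C})$ are tangent to the inverted line $\ell_c$ to obtain \eqref{eq:trig_3}, and apply the law of cosines in $\triangle C\mathbf{x}_mA'$ and $\triangle D\mathbf{x}_mB'$ (with $|\overline{CA'}|=R\pm r_{A'}$ according to outer/inner common tangency) to obtain \eqref{eq:trig_1}--\eqref{eq:trig_2}. Your algebraic cancellation using $r_{A'}(|\overline{A\mathbf{x}_m}|^2-1)=1$ is exactly what the paper leaves implicit when it says ``the cosine law \ldots results Equations \eqref{eq:trig_1} and \eqref{eq:trig_2}.''

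Two remarks. First, the paper's own proof is terse and, like yours, does not carry out the case-by-case angle bookkeeping showing $\angle C\mathbf{x}_mA'=\beta_1+\theta-\alpha^*$ and $\angle D\mathbf{x}_mB'=\beta_2+\theta+\alpha^*$ across all eight turning patterns; you correctly flag this as the main residual work. Second, you go beyond the paper by sketching a uniqueness argument via monotonicity in $R$; the paper's proof does not address uniqueness at all, even though the proposition asserts it. Your sketch is plausible but not complete as stated: after eliminating $\theta$ via \eqref{eq:trig_3}, the two expressions for $\alpha^*$ obtained from \eqref{eq:trig_1} and \eqref{eq:trig_2} involve $\arccos$ of affine functions of $1/R$ with branch choices tied to the turning pattern, so monotonicity of each side in $R$ needs a short justification over the admissible range ($R\ge\tfrac14$ by Lemma~\ref{lem:min_rad}).
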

\begin{proof}
Figure \ref{fig:magnify_crop_triangles} shows the triangles $\triangle CA'\mathbf{x}_m$ and $\triangle DB'\mathbf{x}_m$ of Figure \ref{fig:inversion_proof}. The length of the segment $\overline{CA'}$depends on the line segment $S_2$.  If the line segment $S_2$ is an inner common tangent then $\mathrm{circle}(A', r_{A'})$ is contained in $\mathrm{circle}(C, R)$ and share a common tangent, therefore, $|\overline{CA'}|$ equals $R - r_{A'}$. On the other hand, if the line segment $S_2$ is an outer common tangent the $\mathrm{circle}(A', r_{A'})$ is tangent to $\mathrm{circle}(C, R)$ from outside and the length of the segment $|\overline{CA'}|$ is $R + r_{A'}$. The same applies to the segment $\overline{DB'}$ with respect to the segment $S_4$. The circles centered at $C$ and $D$ in Figure \ref{fig:inversion_proof} are tangent to line $l$. Therefore, the distance of the centers $C$ and $D$ from the line $l$ is $R$. Since the line $l$ passes through the intersection points of the two equally sized circles centered at $\mathbf{x}_m$ and $\mathbf{x}_c$, the distance of $\mathbf{x}_m$ from the line between $C$ and $D$ is $R - \frac{R_{\min}}{2}$ which results Equation \eqref{eq:trig_3}. Finally, the cosine law for triangles $\triangle C\mathbf{x}_mA'$ and $\triangle D\mathbf{x}_mB'$ in Figure \ref{fig:magnify_crop_triangles} result Equations \eqref{eq:trig_1} and \eqref{eq:trig_2}, respectively.
\end{proof}

\begin{figure}
\centering
\includegraphics[width=0.45\linewidth, keepaspectratio=true]{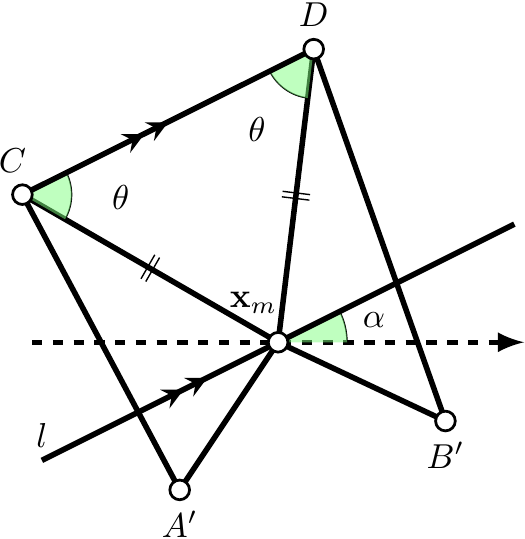}
\caption{Triangles $\triangle CA'\mathbf{x}_m$ and $\triangle DB'\mathbf{x}_m$ of Figure \ref{fig:inversion_proof}. Line $l$ contains $\mathbf{x}_m$ and is parallel to the direction of the heading at $\mathbf{x}_m$.}
\label{fig:magnify_crop_triangles}
\end{figure}

The set of unknowns in Equations \eqref{eq:trig_1}, \eqref{eq:trig_2} and \eqref{eq:trig_3} are $R$, $\alpha$ and $\theta$, where $\alpha$ is the optimal heading
at $\mathbf{x}_m$. Unfortunately, we have been unsuccessful in obtaining a closed form solution to these set of trigonometric equations. In Section \ref{sec:approx_method}, we leverage Proposition \ref{prop:inversion_proof} to bound the optimal heading at the the mid-point. Moreover, we propose a geometric method to approximate the heading, followed by an iterative procedure to converge to the optimal.

\subsection{Special Cases}

In the analysis of the inverse geometry for the $C_1S_2C_3S_4C_5$ paths, there are two special cases in which the line segments $S_2$ and $S_4$ are parallel that must be considered. First, consider the case where the lines $S_2$ and $S_4$ are parallel and non-intersecting. In this case, point $P$ is defined to lie at infinity, and thus, by the definition of the circle inversion, the point $P'$ is placed on $\mathbf{x}_m$. Note that circles $\mathrm{inv}(S_2, \mathcal{C})$ and $\mathrm{inv}(S_4, \mathcal{C})$ coincide only at $\mathbf{x}_m$ and $P'$. Since the two lines are parallel and the radius of the circle of inversion $\mathcal{C}$ is equal to the minimum turning radius, then in the optimal case the two lines are also tangent to the circle of inversion. Therefore, the inverse of the lines are equal in the radius, which yields the result of Lemma~\ref{lem:Radius_inverted_circles}. Moreover, all the parameters in Equations~\eqref{eq:trig_1}, \eqref{eq:trig_2} and \eqref{eq:trig_3} are well-defined. 

Second, consider the case where $S_2$ and $S_4$ are parallel and intersection (i.e., coincident).  Then, the corresponding path $C_1S_2C_3S_4C_5$ must be optimal. This result is established in the following lemma.
\begin{lemma}
\label{lem:special_case_p}
Consider the class of $C_1S_2C_3S_4C_5$ paths with fixed directions $L$ or $R$ for each curved segment. If there exists a $C_1S_2C_3S_4C_5$ path in which $S_2$ and $S_4$ are parallel and intersecting, then the path is the shortest path in the class.
\end{lemma}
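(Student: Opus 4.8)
The plan is to prove the lemma by a direct length comparison. Write $\mathcal{P}^{*}$ for the path in question and $\mathcal{P}$ for an arbitrary path in the class; both run from $X_i$ to $X_f$ through $\mathbf{x}_m$, and within the class the only free parameter is the mid-point heading $\alpha_m$, equivalently the position of the center $\mathbf{x}_c$ of the circle of $C_3$ on $\mathrm{circle}(\mathbf{x}_m,R_{\min})$. First I would pin down the geometry of $\mathcal{P}^{*}$: since $S_2$ and $S_4$ are parallel and share a point, they lie on a common line $\ell^{*}$, and $\ell^{*}$ is simultaneously tangent to $\mathrm{circle}(A,R_{\min})$, $\mathrm{circle}(\mathbf{x}_c,R_{\min})$ and $\mathrm{circle}(B,R_{\min})$. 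A line is tangent to $\mathrm{circle}(\mathbf{x}_c,R_{\min})$ at only one point, so the arc $C_3$ has length $0$ (a full $2\pi$ loop being clearly non-minimal), whence $\mathcal{P}^{*}$ is exactly the plain $C_1SC_5$ Dubins path from $X_i$ to $X_f$ whose straight segment lies on $\ell^{*}$ and runs through $\mathbf{x}_m$. Tangency of $\ell^{*}$ to the three unit circles also forces $A,\mathbf{x}_c,B$ into a rigid position (the three centers collinear, or two of them collinear with the third on a parallel line at distance $2R_{\min}$, according to the turn pattern), so the existence hypothesis of the lemma is equivalent to a concrete statement about $\mathrm{circle}(\mathbf{x}_m,R_{\min})$ meeting a specified segment; this is precisely what makes the forthcoming inequality tight at $\mathcal{P}^{*}$.

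Next I would estimate $\mathrm{len}(\mathcal{P})=\phi_1+|S_2|+\phi_3+|S_4|+\phi_5$, where $\phi_1,\phi_3,\phi_5$ are the arc angles (hence arc lengths, since $R_{\min}=1$). The key invariant is that the signed total turning $\sigma_1\phi_1+\sigma_3\phi_3+\sigma_5\phi_5$, with $\sigma_i=\pm1$ recording the fixed turn directions, is constant over the class: it is continuous in $\alpha_m$ and congruent to $\alpha_f-\alpha_i\pmod{2\pi}$, hence locally constant; call it $T$. Substituting this identity for the arc contribution, and using that the common tangent segment of two unit circles has length equal to the center distance $d$ if it is an outer tangent and $\sqrt{d^{2}-4R_{\min}^{2}}$ if inner, $\mathrm{len}(\mathcal{P})$ collapses to $|T|$ plus $|S_2|+|S_4|$ plus twice the angle of any arc that turns against the net rotation. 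In the two classes where all three arcs turn the same way this residual is just $|\overline{A\mathbf{x}_c}|+|\overline{\mathbf{x}_c B}|$, which by the triangle inequality is at least $|\overline{AB}|$ with equality exactly when $\mathbf{x}_c\in\overline{AB}$, i.e. at $\mathcal{P}^{*}$; together with $\mathrm{len}(\mathcal{P}^{*})=|T|+|\overline{AB}|$ this finishes those cases. In the two classes where the middle arc is opposite to the two end arcs, I would bound the residual $|S_2|+|S_4|+2\phi_3$ below by $|\overline{AB}|$ by exhibiting a genuine curve from $A$ to $B$ of that length: an inner common tangent of two unit circles is parallel to, and touches at the same angular position as, the external tangent from one center to the radius-$2R_{\min}$ circle about the other, so $\sqrt{d^{2}-4R_{\min}^{2}}$ is the length of a straight reach to a radius-$2R_{\min}$ circle and $2\phi_3$ is the length of the matching arc on it; concatenating these pieces joins $A$ to $B$, hence has length at least $|\overline{AB}|$, with equality again at $\mathcal{P}^{*}$.

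The main obstacle is the remaining case analysis: the four classes in which $S_2$ and $S_4$ are one inner and one outer tangent, where the residual mixes a center distance, a square-root term, and a nonzero arc that turns against the net rotation but is attached to an end circle rather than to $\mathrm{circle}(\mathbf{x}_c,R_{\min})$, so the single clean comparison curve above no longer applies; there one must assemble the comparison curve more carefully and track the sign of $T$, which governs which arcs count against the rotation. A secondary point requiring care is verifying, in each pattern, that the existence hypothesis really does place $\mathbf{x}_c$ at exactly the configuration making the chosen lower bound tight, equivalently that whenever $\mathbf{x}_c$ sits elsewhere the tangents $S_2,S_4$ are forced to be parallel but non-coincident (the other special case) rather than coincident. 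Once these are handled, $\mathrm{len}(\mathcal{P})\ge\mathrm{len}(\mathcal{P}^{*})$ holds in every class and the lemma follows.
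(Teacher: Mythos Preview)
Your approach is sound where it is completed, but it is substantially more intricate than the paper's argument and, as you yourself note, leaves four of the eight turn patterns unresolved along with the secondary tightness verification. The paper's proof avoids the global length comparison altogether by exploiting the one structural fact you do identify but then set aside: when $S_2$ and $S_4$ lie on a common line, the arc $C_3$ has length zero. Once $C_3=0$, the path factors as the concatenation of a $C_1S$ path from $X_i$ to the point $\mathbf{x}_m$ and an $SC_5$ path from $\mathbf{x}_m$ to $X_f$. Each of these halves is individually the shortest path of its type (i.e., the shortest $C_1SC_3$ path with the prescribed turn directions from $X_i$ to the point $\mathbf{x}_m$, and analogously for the second half), since any nonzero terminal arc at $\mathbf{x}_m$ can only add length. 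The sum of two independent minima is then the overall minimum, with no case split on the $L/R$ pattern required.

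The contrast is instructive. Your total-turning invariant and the triangle-inequality reduction are correct devices, and in the symmetric classes they give a pleasant closed-form lower bound $|T|+|\overline{AB}|$. But the method couples the two halves of the path through the shared residual, which is exactly what forces the awkward mixed-tangent case analysis. The paper's decomposition at $\mathbf{x}_m$ decouples them: minimizing the first half over $\alpha_m$ and minimizing the second half over $\alpha_m$ happen to be achieved at the \emph{same} $\alpha_m$ precisely under the hypothesis that a coincident-$S_2S_4$ path exists, and that single observation replaces all eight cases at once.
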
 
\begin{proof}
In a $C_1S_2C_3S_4C_5$ path where the lines  $S_2$ and $S_4$ are parallel and intersecting, the $C_3$ curve is zero. Therefore, $C_1S_2$ is the shortest path from $X_i$ to $\mathbf{x}_m$ in between all the paths of its type. The analysis for the path $S_4C_5$ is analogous. Therefore, the path consists of two shortest paths between $\mathbf{x}_m$, initial and final configurations. Therefore, the path is the shortest path in the given class of paths.
\end{proof}

Note that in the second case, the corrosponding path is the optimal of its type, therefore, no further analysis is required for finding the optimal heading at the midpoint.

\section{Three-point Dubins Algorithm}
\label{sec:Optimal CSCSC Paths}

In this section we propose a simple method to find the optimal path in the problem instance $(X_i, \mathbf{x}_m, X_f)$. First, leveraging the properties in Section \ref{sec:inversion_proof}, we propose a method to find an approximate midpoint heading.

\subsection{Approximation Method}
\label{sec:approx_method}
In this section we propose an approximation of the optimal heading at the mid-point $\mathbf{x}_m$. We assume that the pair-wise distances of $\mathbf{x}_i$, $\mathbf{x}_m$ and $\mathbf{x}_f$ go to infinity. Then, the length of segments $\overline{A\mathbf{x}_m}$ and $\overline{B\mathbf{x}_m}$ go to infinity which, by Equation \eqref{eq:inversion_eq}, implies  $|\overline{A'\mathbf{x}_m}|$ and $|\overline{B'\mathbf{x}_m}|$ approach zero.
From Lemma \ref{lem:min_rad} (see Appendix~\ref{appendix:1}), the radius of the circles $\mathrm{inv}(S_2, \mathcal{C})$ and $\mathrm{inv}(S_4, \mathcal{C})$ is bounded from below by $\frac{1}{4}R_{\min}$. Therefore, the angles  $\angle \mathbf{x}_mCA'$ and  $\angle \mathbf{x}_mDB'$ (see Figure \ref{fig:magnify_crop_triangles}) approach zero and the angles $\angle C\mathbf{x}_mA$ and  $\angle B\mathbf{x}_mD$ go to $\frac{\pi}{2}$.

Therefore, in terms of the angles $\beta_1 = \angle \mathbf{x}_mAB$, $\beta_2 = \angle \mathbf{x}_mBA$, $\theta = \angle \mathbf{x}_mCD$, we have
\[
\label{eq:init_guess_1}
\beta_1 - \alpha + \theta = \frac{\pi}{2}, \quad \beta_2 + \alpha + \theta = \frac{\pi}{2}.
\]

From these equations we can approximate the heading $\alpha$ at the mid-point $\alpha$ by 
\begin{equation}
	\label{eq:approx_heading}
\bar{\alpha} = \frac{\beta_1-\beta_2}{2}.
\end{equation}
The following result establishes the maximum error between $\bar \alpha$ and the optimal heading $\alpha^*$.  The proof is given in Appendix~\ref{appendix:1}.

\begin{proposition}[Maximum error of approximated heading] \label{prop:max_err_init_guess_1}
For the optimal path of Problem \ref{prob:three_point}, the following holds for the
optimal heading at $\mathbf{x}_m$:
\[
    \left|\alpha^* - \frac{\beta_1 - \beta_2}{2}\right| \leq \zeta.
\]
For the optimal path $P^*$, the bound $\zeta$ is defined as
\begin{enumerate}
\item $\zeta = 0$  if  $|\overline{A\mathbf{x}_m}|=|\overline{B\mathbf{x}_m}|$ and $P^*$ is $RSRSR$, $LSLSL$, $RSLSR$, or $LSRSL$;
\item $\zeta = \frac{\pi}{9}$  if $P^*$ is $RSRSR$ or $LSLSL$; 
\item $\zeta = \frac{\pi}{5}$  if $P^*$ is  $RSLSR$ or $LSRSL$; and
\item $\zeta = \frac{11\pi}{36}$ if $P^*$ is $RSRSL$, $LSRSR$, $RSLSL$, or $LSLSR$.
\end{enumerate}
\end{proposition}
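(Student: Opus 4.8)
The plan is to convert the three equations of Proposition~\ref{prop:inversion_proof} into a direct estimate of $\alpha^*-\bar\alpha$. First I would eliminate $R$: equating the right-hand sides of \eqref{eq:trig_1} and \eqref{eq:trig_3}, and of \eqref{eq:trig_2} and \eqref{eq:trig_3}, yields
\[
\cos(\beta_1+\theta-\alpha^*)=\frac{1-\mu_A-\sin\theta}{|\overline{A\mathbf{x}_m}|},\qquad
\cos(\beta_2+\theta+\alpha^*)=\frac{1-\mu_B-\sin\theta}{|\overline{B\mathbf{x}_m}|}.
\]
Set $g_A:=\beta_1+\theta-\alpha^*$ and $g_B:=\beta_2+\theta+\alpha^*$; by the cosine-law step in the proof of Proposition~\ref{prop:inversion_proof} these are the triangle angles $\angle C\mathbf{x}_m A'$ and $\angle D\mathbf{x}_m B'$ of Figure~\ref{fig:magnify_crop_triangles}, hence lie in $(0,\pi)$. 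From $\alpha^*=\beta_1+\theta-g_A$ and $\alpha^*=g_B-\beta_2-\theta$, adding and halving gives $\alpha^*=\tfrac{\beta_1-\beta_2}{2}+\tfrac{g_B-g_A}{2}=\bar\alpha+\tfrac{g_B-g_A}{2}$, so that
\[
\bigl|\alpha^*-\bar\alpha\bigr|=\tfrac12\,|g_A-g_B|.
\]
This also exhibits $\bar\alpha$ as exactly the heading obtained by forcing $g_A=g_B$, i.e.\ the limiting relation $g_A=g_B=\pi/2$ behind \eqref{eq:approx_heading}.

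Next I would bound $g_A$ (and symmetrically $g_B$) near $\pi/2$. Under the $4R_{\min}$ assumption, $|\overline{A\mathbf{x}_m}|,|\overline{B\mathbf{x}_m}|\ge 3R_{\min}=3$, and $\theta\in(0,\pi/2)$ (a base angle of the isosceles triangle $\triangle\mathbf{x}_mCD$, consistent with \eqref{eq:trig_3}; Lemma~\ref{lem:min_rad} keeps $R$, and hence $\theta$, in the admissible range), so $\sin\theta\in(0,1)$. Since $1-\mu_A-\sin\theta$ equals $-\sin\theta\in(-1,0)$ when $\mu_A=1$ and equals $2-\sin\theta\in(1,2)$ when $\mu_A=-1$, we obtain
\[
|\cos g_A|\le
\begin{cases}
\tfrac13,&\mu_A=1,\\
\tfrac23,&\mu_A=-1,
\end{cases}
\]
and likewise for $g_B$ with $\mu_B$. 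Because $g_A\in(0,\pi)$, this forces $g_A\in[\pi/2,\ \pi/2+\arcsin\tfrac13]$ when $\mu_A=1$ (the cosine is nonpositive) and $g_A\in[\pi/2-\arcsin\tfrac23,\ \pi/2]$ when $\mu_A=-1$ (the cosine is nonnegative), and analogously for $g_B$.

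Finally I would translate the turn pattern of $P^*$ into $(\mu_A,\mu_B)$ --- $RSRSR,LSLSL$ give $(1,1)$; $RSLSR,LSRSL$ give $(-1,-1)$; the remaining four give one of each --- and plug the ranges above into $\bigl|\alpha^*-\bar\alpha\bigr|=\tfrac12|g_A-g_B|$. For item (ii), $g_A,g_B$ lie in the common interval $[\pi/2,\pi/2+\arcsin\tfrac13]$, so $|g_A-g_B|\le\arcsin\tfrac13$ and the error is at most $\tfrac12\arcsin\tfrac13<\pi/9$. For item (iii), $g_A,g_B$ lie in the common interval $[\pi/2-\arcsin\tfrac23,\pi/2]$, so $|g_A-g_B|\le\arcsin\tfrac23$ and the error is at most $\tfrac12\arcsin\tfrac23<\pi/5$. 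For item (iv) (mixed signs), $g_A$ and $g_B$ together lie in an interval of length $\arcsin\tfrac13+\arcsin\tfrac23$, so the error is at most $\tfrac12\bigl(\arcsin\tfrac13+\arcsin\tfrac23\bigr)<\tfrac{11\pi}{36}$. For item (i), if additionally $|\overline{A\mathbf{x}_m}|=|\overline{B\mathbf{x}_m}|$ and $\mu_A=\mu_B$ (exactly the four listed types), the two displayed cosines are equal and $g_A,g_B$ sit on the same branch, so $g_A=g_B$ and $\zeta=0$.

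I expect the delicate part to be the branch bookkeeping: one must verify, for each of the eight turn patterns, that $\beta_1+\theta-\alpha^*$ is genuinely the angle $\angle C\mathbf{x}_m A'\in(0,\pi)$ (so that $g_A$ is near $\pi/2$, on the side dictated by the sign of $\cos g_A$, and not near $\pi/2\pm\pi$), and likewise that $\theta\in(0,\pi/2)$ and $|\overline{A\mathbf{x}_m}|\ge 3R_{\min}$ hold in every configuration that can be optimal. The two parallel special cases of Section~\ref{sec:inversion_proof} need a separate line --- for coincident $S_2,S_4$ the path is already the shortest of its type by Lemma~\ref{lem:special_case_p} --- while everything else reduces to monotonicity of $\arccos$ on $[-1,1]$ and the elementary inequalities $\arcsin\tfrac13<\pi/9$ and $\arcsin\tfrac23<2\pi/5$.
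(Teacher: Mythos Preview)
Your argument is correct and follows essentially the same route as the paper: eliminate $R$ from Proposition~\ref{prop:inversion_proof}, bound the two angles $g_A=\beta_1+\theta-\alpha^*$ and $g_B=\beta_2+\theta+\alpha^*$ near $\pi/2$, and then read off $|\alpha^*-\bar\alpha|=\tfrac12|g_A-g_B|$ case by case according to $(\mu_A,\mu_B)$. The one substantive difference is in how the range of $R$ (equivalently, of $\sin\theta$) is obtained. The paper invokes Lemma~\ref{lem:min_rad} to get $R\in[1/4,\infty)$ and then plugs the endpoints into \eqref{eq:trig_1}--\eqref{eq:trig_2}, arriving at intervals such as $g_A\in[\cos^{-1}(1/3),\cos^{-1}(-1/3)]$ for $\mu_A=1$ and $g_B\in[0,\cos^{-1}(1/3)]$ for $\mu_B=-1$. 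You instead substitute \eqref{eq:trig_3} directly, writing $1/(2R)=1-\sin\theta$ with $\theta\in(0,\pi/2)$ (as a base angle of the isosceles triangle $\triangle\mathbf{x}_mCD$), which yields the sharper one-sided intervals $[\pi/2,\pi/2+\arcsin\tfrac13]$ and $[\pi/2-\arcsin\tfrac23,\pi/2]$. Your intermediate bounds are therefore about half as wide as the paper's in cases (ii)--(iii), though both comfortably imply the stated $\zeta$ values; your approach also sidesteps the need for Lemma~\ref{lem:min_rad} altogether. The ``delicate part'' you flag---checking that $g_A,g_B$ really are the triangle angles in $(0,\pi)$ for every turn pattern---is exactly what the paper leaves implicit, so your caveat is well placed rather than a gap.
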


Note that the $\zeta$ values in Proposition~\ref{prop:max_err_init_guess_1}  are the worst-case bounds, thus in Section \ref{sec:simulations}, we provide the mean deviation of the approximated heading from the optimal on $50000$ instances with various distance constraints. Also note that the worst-case bounds improve as the distance between the points increase.
Given these approximation of the optimal heading at the mid-point, we can initialize an
iterative method to converge to the optimal.

\subsection{Iterative Method}\label{sec:iterative_method}
Starting from the heading given in Section \ref{sec:inversion_proof} as the initial heading, we propose the following method for iteratively improving the heading. The  method converges to the
optimal heading by iteratively correcting the angle between the bisector of the two line segments of the path $C_1S_2C_3S_4C_5$
and the vector between the mid-point and the center of the circle associated with the
heading (see Figure \ref{fig:vecs}). Without loss of generality, assume that the center of the first curve is located at the
origin and the center of the final curve is located at $(x_f, 0)$ and let $\mathbf{x}_m = (x_m, y_m)$ be the
mid-point. We define  vectors $\vec{v_i}$ and $\vec{v_f}$ parallel to the first and second
line segments the path $C_1S_2C_3S_4C_5$. Let $\mathbf{x}_c = (x_c, y_c)$ be the center of the circle associated with a
heading at the mid-point. Let $\mathrm{Rot}_{\theta}$ be the rotation matrix with angle $\theta$ and $\vec{e}_v$ be the unit length vector in the direction of vector $\vec{v}$. We have, 
\[\vec{v}_i = \mathrm{Rot}_{\theta_i}[x_c, y_c], \quad \vec{v}_f = \mathrm{Rot}_{\theta_f}[x_c - x_f, y_c],\]
\[\vec{v}_m = [x_m - x_c, y_m - y_c], \quad \vec{v} = \vec{e}_{v} + \vec{e}_{v_f}.\]

\begin{figure}
\centering
\includegraphics[width=0.7\linewidth, keepaspectratio=true]{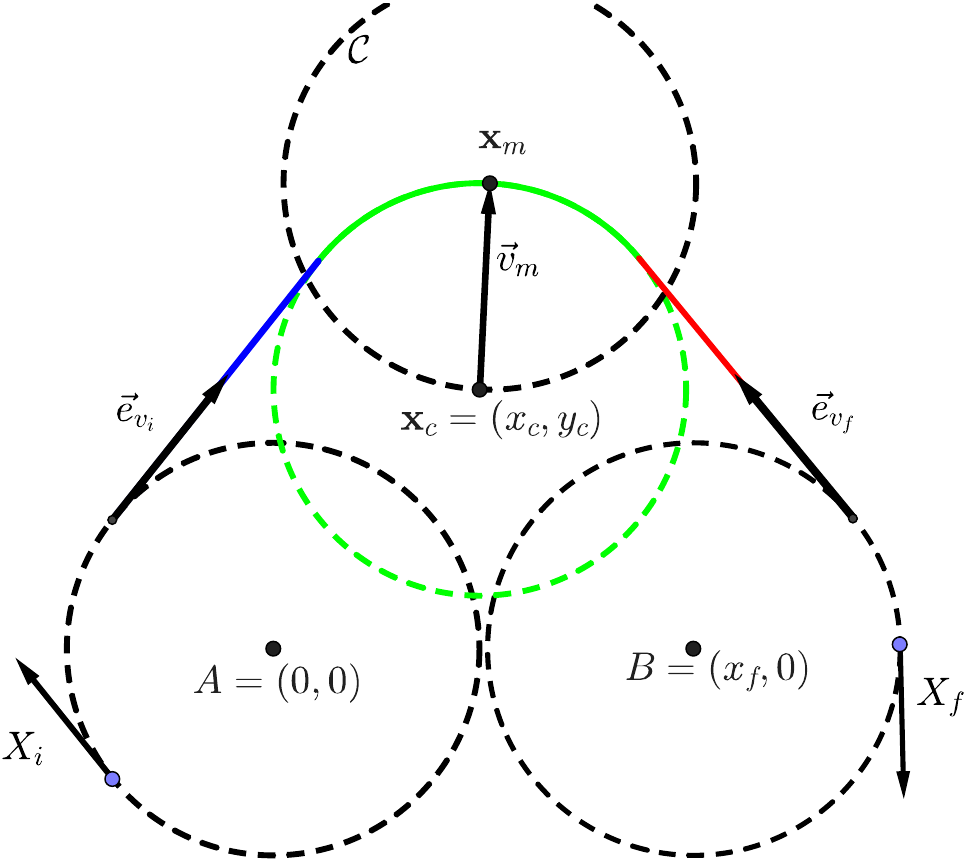}
\caption{The vectors $\vec{e}_{v_i}, \vec{e}_{v_f}$ and $\vec{v_m}$ for $R_1S_2R_3S_4R_5$ path.}
\label{fig:vecs}
\end{figure}

The angle $\theta_i$ is the angle of a common tangent of two circles $\mathrm{circle}(A, 1)$ and $\mathrm{circle}(\mathbf{x}_c, 1)$ from the line connecting the centers $A$ and $\mathbf{x}_c$. The angle $\theta_i$ equals zero if the line segment is an outer-common tangent and $\sin^{-1}(2/|v_i|)$, otherwise.
The algorithm for each path of type $C_1S_2C_3S_4C_5$ is as follows:
\begin{enumerate}
\item Find the approximated heading $\bar{\alpha}$ (Equation \eqref{eq:init_guess_1}),
\item Compute vectors $\vec{v_i}$, $\vec{v_f}$ and $\vec{v_m}$,
\item Compute the vector $\vec{v}$ bisecting the angle between $\vec{v_i}$ and~$\vec{v_f}$,
\item Return if vectors $\vec{v}$ and $\vec{v_m}$ are aligned,
\item compute the angle $\gamma$ between $\vec{v_i}$ and $\vec{v_m}$,
\item Rotate $(x_c, y_c)$ about $\mathbf{x}_m$ by $\gamma$,
\item continue from step (ii).
\end{enumerate}

The problem of finding the optimal heading at the mid-point is defined as the
following:
\begin{equation} 
\min_{x_c,y_c}  \cos^{-1}(\vec{e}_v \cdot \vec{v}_m)
\label{eq:obj}
\end{equation}

\begin{figure}
        \centering
        \begin{subfigure}{\linewidth}
        \centering
                \includegraphics[width=0.8\linewidth, keepaspectratio=true]{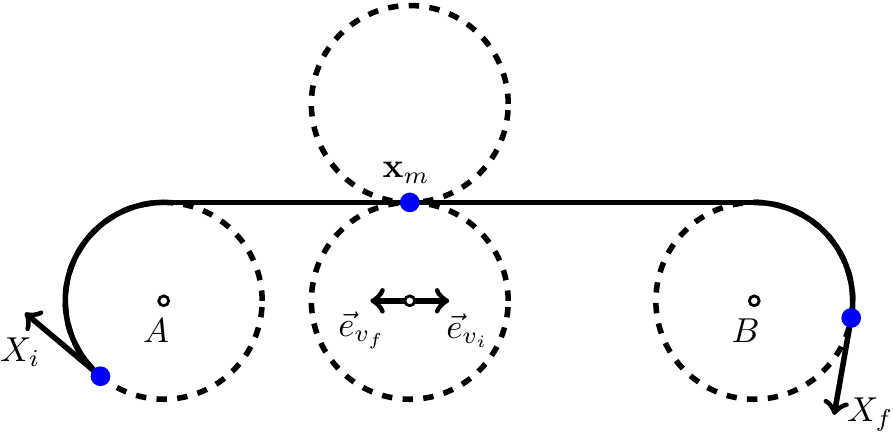}
                \label{fig:RS_1RS_2R}
        \end{subfigure}
        \begin{subfigure}{\linewidth}
        \centering
                \includegraphics[width=0.8\linewidth, keepaspectratio=true]{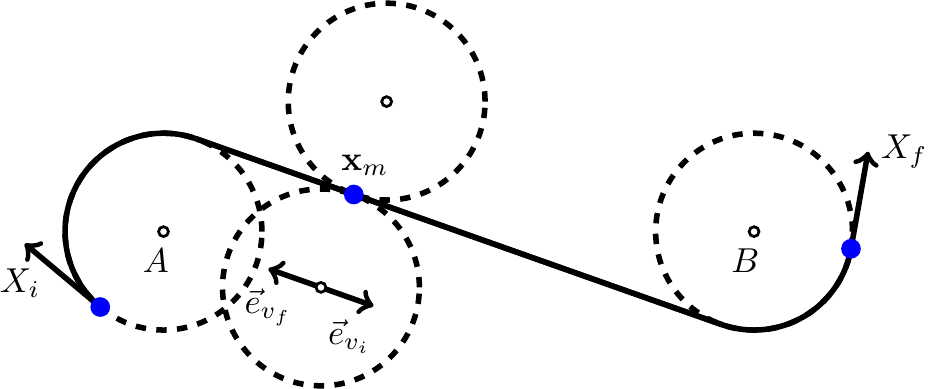}
        \label{fig:RS_1RS_2L}
        \end{subfigure}
        \caption{Two examples of the case that $\vec{e}_{v_i}$ and $\vec{e}_{v_f}$ sum to zero.}
        
        \label{fig:spec_cases}
\end{figure}

The minimum of~\eqref{eq:obj} occurs when the vectors $\vec{v}$ and $\vec{v_m}$ are parallel. Note that the derivative of the right hand side of objective function $\eqref{eq:obj}$ is not defined where the vectors $\vec{e}_v$
and $\vec{v}_m$ are parallel. However, minimizing~\eqref{eq:obj} is equivalent to the following maximization:
\begin{equation}
\label{eq:objective}
    \max_{x_c, y_c} \vec{e}_v \cdot \vec{v}_m
\end{equation}

To prove correctness of the iterative method, it suffices to show that all local maxima $(x_c,y_c)$ of~\eqref{eq:objective} are globally maximal.  The following lemma validates the iterative method.
\begin{lemma}
\label{lemma:convergence}
The center of the circle associated with the optimal heading is the unique maximizer of~\eqref{eq:objective}.
\end{lemma}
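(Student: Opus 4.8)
The plan is to show that the objective in \eqref{eq:objective}, viewed as a function of the center $(x_c,y_c)$ constrained to a one-dimensional family (since the heading at $\mathbf{x}_m$ is a single scalar, the admissible centers $\mathbf{x}_c$ lie on $\mathrm{circle}(\mathbf{x}_m,1)$), has a unique critical point, and that this critical point is a maximum. Parametrize the center of the mid-point circle by the heading angle $\alpha$, so $\mathbf{x}_c = \mathbf{x}_c(\alpha)$ traces $\mathrm{circle}(\mathbf{x}_m,1)$. The first step is to rewrite $\vec e_v \cdot \vec v_m$ explicitly as a function $g(\alpha)$ using the definitions of $\vec v_i$, $\vec v_f$, $\vec v_m$ and the rotation angles $\theta_i(\alpha)$, $\theta_f(\alpha)$ determined by the common-tangent type of each of the two pairs of circles. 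Because $\vec e_v$ is the unit bisector of $\vec e_{v_i}$ and $\vec e_{v_f}$, aligning $\vec v$ with $\vec v_m$ is exactly the bisector condition of Lemma~\ref{lem:bisect}; so the critical-point equation of $g$ should reproduce the optimality characterization already derived, and in particular the system \eqref{eq:trig_1}--\eqref{eq:trig_3} of Proposition~\ref{prop:inversion_proof}.

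Next I would argue existence and uniqueness of the stationary point. Existence is immediate: $g$ is continuous on the circle of admissible headings (or on the relevant closed sub-arc on which the fixed turn-direction pattern $C_1S_2C_3S_4C_5$ is realizable), hence attains a maximum, and at an interior maximum the derivative vanishes; the special cases where $S_2\parallel S_4$ are handled separately — the non-intersecting parallel case was shown above to keep all quantities well defined, and the coincident case is dealt with by Lemma~\ref{lem:special_case_p}, which makes the path trivially optimal so no interior analysis is needed. For uniqueness, the key is monotonicity: I would show that as $\alpha$ increases, the signed angle between the bisector direction $\vec e_v$ and $\vec v_m$ is strictly monotdecreasing (equivalently $g'$ changes sign exactly once). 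Intuitively, rotating $\mathbf{x}_c$ about $\mathbf{x}_m$ by a positive increment rotates $\vec v_m$ by that same increment, while it moves the two tangent lines — and hence their bisector $\vec e_v$ — by a strictly smaller amount, because each tangent line pivots about a far-away circle ($A$ or $B$, at distance $\ge 4R_{\min}$) and the induced angular change is damped by roughly the ratio $1/|\overline{A\mathbf{x}_m}|$ (this is where Lemma~\ref{lem:min_rad} and the bound $r_{A'},r_{B'}$ small, i.e. the $4R_{\min}$ assumption, enter). Hence the gap angle between $\vec v_m$ and $\vec e_v$ strictly increases in one rotational direction and strictly decreases in the other, forcing a single zero.

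Concretely, I would compute $\frac{d}{d\alpha}\bigl(\angle(\vec v_m) - \angle(\vec e_v)\bigr)$ and show it is bounded below by a positive constant: $\angle(\vec v_m)$ has derivative $1$, while $\frac{d}{d\alpha}\angle(\vec e_v) = \tfrac12\bigl(\tfrac{d}{d\alpha}\angle(\vec e_{v_i}) + \tfrac{d}{d\alpha}\angle(\vec e_{v_f})\bigr)$, and each summand is shown to lie strictly between $-1$ and $1$ using the geometry of the outer/inner common tangents (the tangent-line direction is $\angle(\overline{A\mathbf{x}_c}) \pm \theta_i$, and both terms vary slowly in $\alpha$ when $|\overline{A\mathbf{x}_m}|$ is large relative to $R_{\min}$). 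This gives strict monotonicity of the gap, hence a unique stationary $\alpha$, which is therefore the unique local — and thus global — maximizer; since the center $\mathbf{x}_c$ of the optimal-heading circle must satisfy the bisector/stationarity condition by Lemma~\ref{lem:bisect}, it coincides with this unique maximizer. The main obstacle I anticipate is the bookkeeping for the eight turn-direction patterns and the two common-tangent types per segment: one must verify the derivative bound $\bigl|\tfrac{d}{d\alpha}\angle(\vec e_{v_i})\bigr|<1$ uniformly across all cases, and handle the boundary of the sub-arc where a tangent type switches (where $\theta_i = \pi/2$ and the $C_3$ or $C_5$ arc degenerates), showing the maximum is not lost at such a boundary or is captured by the special-case lemmas.
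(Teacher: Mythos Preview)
Your route differs substantially from the paper's. The paper does not argue monotonicity of the gap angle at all; instead it differentiates $\vec e_v\cdot\vec v_m$ directly in $\alpha$, writes $\vec v_m$ in the (generically independent) frame $\{\vec e_{v_i},\vec e_{v_f}\}$ as $\vec v_m=\frac{1}{|\vec e_{v_i}+\vec e_{v_f}|}\vec e_{v_i}+\frac{c}{|\vec e_{v_i}+\vec e_{v_f}|}\vec e_{v_f}$, and reduces the stationarity condition algebraically to $(1-c)^2\,(\vec e_{v_i}\cdot\mathrm{Rot}_{\pi/2}\vec e_{v_f})=0$, forcing $c=1$ whenever $\vec e_{v_i},\vec e_{v_f}$ are independent (the dependent case is handled separately and coincides with your coincident-tangent special case). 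Crucially, this computation uses \emph{no} distance hypothesis, and the paper later relies on exactly that (Remark~\ref{rem:relaxing}) to apply the iterative method below the $4R_{\min}$ threshold --- a conclusion your argument cannot reach.

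There is also a concrete weak point in your plan. The inequality $\bigl|\tfrac{d}{d\alpha}\angle(\vec e_{v_i})\bigr|<1$ is the load-bearing step, and for inner common tangents it is not guaranteed by the $4R_{\min}$ assumption alone: with $R_{\min}=1$ one only gets $|\overline{A\mathbf{x}_m}|\ge 3$ and hence $|\overline{A\mathbf{x}_c}|\ge 2$, and near $|\overline{A\mathbf{x}_c}|=2$ the term $\theta_i=\sin^{-1}(2/|\overline{A\mathbf{x}_c}|)$ has derivative of order $(|\overline{A\mathbf{x}_c}|^2-4)^{-1/2}$, which blows up. A short computation at $|\overline{A\mathbf{x}_m}|=3$ shows $\tfrac{d}{d\alpha}\angle(\vec e_{v_i})$ can exceed $1$ in magnitude for one of the two inner tangents, so the monotonicity of the gap angle is not automatic; you would need either a sharper estimate that exploits cancellation between the $\angle(\overline{A\mathbf{x}_c})$ and $\theta_i$ contributions, or an argument that the optimum cannot lie in the region where the bound fails. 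Your approach buys a clean geometric picture of why the iteration contracts; the paper's buys a short, case-free identity that is valid without any separation hypothesis.
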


\begin{proof}
Note that the length of the vectors $\vec{v}_m$, $\vec{e}_{v_i}$ and $\vec{e}_{v_f}$ are independent of $\alpha$, and
the derivative of either of the vectors are orthogonal to the vector itself.
The derivative of $\vec{e}_v \cdot \vec{v}_m$ with respect to change of the heading $\alpha$ at the mid-point is as follows: 
\begin{align}
\begin{split}
    \frac{d(\vec{e}_v \cdot \vec{v}_m)}{d\alpha} =  - \frac{\vec{e}_{v_f}\cdot\frac{d\vec{e}_{v_i}}{d\alpha} + \vec{e}_{v_i}\cdot\frac{d\vec{e}_{v_f}}{d\alpha}}{|\vec{e}_{v_i} + \vec{e}_{v_f}|^3}(\vec{e}_{v_i} + \vec{e}_{v_f})\cdot \vec{v}_{m}  + \\ \frac{\frac{d\vec{e}_{v_i}}{d\alpha} + \frac{d\vec{e}_{v_f}}{d\alpha} }{|\vec{e}_{v_i} + \vec{e}_{v_f}|}\cdot v_{m} + \frac{\vec{e}_{v_i} + \vec{e}_{v_f}}{|\vec{e}_{v_i} + \vec{e}_{v_f}|} \cdot \frac{dv_{m}}{d\alpha}.
\end{split}
    \label{eq:deriv_1}
\end{align}

Case (i): The vectors $\vec{v}_i$ and $\vec{v}_f$ are dependent, Equation \eqref{eq:deriv_1} simplifies to
\begin{equation}\label{eq:deriv_2}
\frac{d(\vec{e}_v \cdot \vec{v}_m)}{d\alpha} =  \frac{\frac{d\vec{e}_{v_i}}{d\alpha} + \frac{d\vec{e}_{v_f}}{d\alpha} }{|\vec{e}_{v_i} + \vec{e}_{v_f}|}\cdot v_{m} + \frac{\vec{e}_{v_i} + \vec{e}_{v_f}}{|\vec{e}_{v_i} + \vec{e}_{v_f}|}\cdot \frac{dv_{m}}{d\alpha}.
\end{equation}

Note that $|\vec{e}_{v_i} + \vec{e}_{v_f}|$ equals zeros only if the vectors $\vec{e}_{v_i}$ and $\vec{e}_{v_f}$ are collinear with different directions. Therefore, the line segments are tangent to the circle at $\mathbf{x}_m$ and the heading 
corresponding to this case is the optimal heading (see Figure \ref{fig:spec_cases}). If $\vec{e}_{v_i}$ and $\vec{e}_{v_f}$ do not sum up to zero, then  
the optimal heading is the root of the following equation:
\begin{align*}
 (\frac{d\vec{e}_{v_i}}{d\alpha} + \frac{d\vec{e}_{v_f}}{d\alpha})\cdot v_{m} + (\vec{e}_{v_i} + \vec{e}_{v_f})\cdot \frac{dv_{m}}{d\alpha} \\= \frac{d}{d\alpha}((\vec{e}_{v_i} + \vec{e}_{v_f})\cdot \vec{v}_{m}) = 0.
 \end{align*}

Trivially, the roots of this equation occur where $\vec{v_m}$ is parallel to $\vec{e}_{v_i} + \vec{e}_{v_f}$.

Case (ii): The vectors $\vec{v_i}$ and $\vec{v_f}$ are linearly independent. Therefore, we can write $\vec{v_m}$
and the derivative as a linear combinations of $\vec{v_i}$ and $\vec{v_f}$, i.e.
\begin{equation}
\label{eq:linearly_independent1}
    \vec{v}_m = \frac{1}{|\vec{e}_{v_i} + \vec{e}_{v_f}|}\vec{e}_{v_i} + \frac{c}{|\vec{e}_{v_i} + \vec{e}_{v_f}|}\vec{e}_{v_f},
\end{equation}
\begin{equation}
\label{eq:linearly_independent2}
    \frac{d\vec{v}_m}{d\alpha} = \frac{1}{|\vec{e}_{v_i} + \vec{e}_{v_f}|}\mathrm{Rot}_{\frac{\pi}{2}}\vec{e}_{v_i} + \frac{c}{|\vec{e}_{v_i} + \vec{e}_{v_f}|}\mathrm{Rot}_{\frac{\pi}{2}}\vec{e}_{v_f}.
\end{equation}

In order to prove that the extremums occur where the bisector of the angle between $\vec{v_i}$ and $\vec{v_f}$ aligns with $\vec{v_m}$, we need to show that $c = 1$ is the only solution to Equation \eqref{eq:deriv_1}. 
Substituting Equations \eqref{eq:linearly_independent1} and \eqref{eq:linearly_independent2} into Equation \eqref{eq:deriv_1} and simplify the equation, we have

\begin{align*}(1 - c)(\vec{e}_{v_f} \cdot \mathrm{Rot}_{\frac{\pi}{2}}\vec{e}_{v_i} + c\vec{e}_{v_i} \cdot\mathrm{Rot}_{\frac{\pi}{2}}\vec{e}_{v_f}) = \\ (1-c)^2(\vec{e}_{v_i}\cdot \mathrm{Rot}_{\frac{\pi}{2}}\vec{e}_{v_f}) = 0.\end{align*}
With the assumption that $\vec{v_i}$ and $\vec{v_f}$ are linearly independent, the equation equals to zero if and
only if $c = 1$.
\end{proof}

An immediate consequence of Lemma \ref{lemma:convergence} is the convergence of the iterative method.
\begin{corollary}
The iterative method converges to the optimal heading at the mid-point.
\end{corollary}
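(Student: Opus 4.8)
The plan is to read the iterative method of Section~\ref{sec:iterative_method} as an ascent scheme for the objective~\eqref{eq:objective} and to combine Lemma~\ref{lemma:convergence} with a compactness argument. Fix one of the eight path types $C_1S_2C_3S_4C_5$. The heading $\alpha$ at $\mathbf{x}_m$ then ranges over a compact set of feasible headings, the center $\mathbf{x}_c$ and the vectors $\vec{e}_{v_i},\vec{e}_{v_f},\vec{v}_m$ are continuous functions of $\alpha$, and hence $g(\alpha):=\vec{e}_v\cdot\vec{v}_m$ is continuous wherever $\vec{e}_{v_i}+\vec{e}_{v_f}\neq 0$; therefore $g$ attains its maximum on this set. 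By Lemma~\ref{lemma:convergence} this maximizer is unique and is the center of the circle of the optimal heading, so it suffices to show the iteration converges to it.

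Next I would establish three structural facts. (a) \emph{The iteration map is well defined and continuous.} Steps (ii)--(vii) send $\alpha_k$ to $\alpha_{k+1}$ by forming $\vec{v}_i,\vec{v}_f,\vec{v}_m$, bisecting, and rotating $\mathbf{x}_c$ about $\mathbf{x}_m$ by $\gamma=\angle(\vec{v}_i,\vec{v}_m)$; this is continuous in $\alpha_k$ except where $\vec{e}_{v_i}+\vec{e}_{v_f}=0$, but by the proof of Lemma~\ref{lemma:convergence} (and Lemma~\ref{lem:special_case_p}) that configuration is already the optimal heading and is caught by the stopping test in step~(iv), so it poses no problem. (b) \emph{Fixed points are exactly the optimum.} The stopping condition ``$\vec{v}$ and $\vec{v}_m$ aligned'' is precisely the condition $g'(\alpha)=0$ analysed in the proof of Lemma~\ref{lemma:convergence}; since that proof shows the critical-point equation has a unique solution, the only fixed point of the iteration is $\alpha^{*}$. (c) \emph{Each non-terminating step strictly increases $g$.} Using the expression for $dg/d\alpha$ from the proof of Lemma~\ref{lemma:convergence}, its sign is constant on the arc of headings strictly between $\alpha_k$ and $\alpha^{*}$ (it vanishes only at $\alpha^{*}$), so $g$ is unimodal there, and the rotation by $\gamma$ moves the heading toward $\alpha^{*}$ without crossing it, giving $g(\alpha_{k+1})>g(\alpha_k)$.

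The conclusion is then a standard monotone-plus-compact argument: $\{g(\alpha_k)\}$ is nondecreasing and bounded, hence convergent; by compactness $\{\alpha_k\}$ has a subsequence $\alpha_{k_j}\to\alpha^{\infty}$; continuity of the iteration map and of $g$ together with (c) force $\alpha^{\infty}$ to be a fixed point, so by (b) $\alpha^{\infty}=\alpha^{*}$; and since $\alpha^{*}$ is the unique fixed point and $g$ is unimodal, the full sequence converges, $\alpha_k\to\alpha^{*}$, i.e., to the optimal heading. Applying this for each of the eight path types and keeping the best path yields the corollary.

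I expect step (c) to be the main obstacle: the method takes a full rotation of size $\gamma$ rather than an infinitesimal gradient step, so one must rule out overshooting past $\alpha^{*}$ (equivalently, show the angular distance $|\alpha_k-\alpha^{*}|$ strictly decreases). I would handle this by showing geometrically that aligning the next $\vec{v}_m$ with the current $\vec{v}_i$ keeps it strictly inside the cone spanned by the updated $\vec{v}_i,\vec{v}_f$ and on the same side of the bisector as before, so the sign of the misalignment with the bisector is preserved while its magnitude shrinks; the degenerate boundary cases ($\vec{v}_i\parallel\vec{v}_f$, or $\vec{e}_v$ undefined) are exactly the situations already shown to be optimal, so they only help.
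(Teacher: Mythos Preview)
The paper offers no detailed proof of this corollary: it simply states that convergence ``is an immediate consequence of Lemma~\ref{lemma:convergence}.'' Your plan is therefore not a different route to the same proof but an attempt to supply the argument the paper omits. You are right to notice that uniqueness of the maximizer of~\eqref{eq:objective} by itself does not imply that the particular update in steps~(ii)--(vii) converges; one also needs a property of the iteration (monotonicity, contraction, or similar). In that sense your decomposition into (a)~continuity, (b)~fixed-point characterization, and (c)~strict ascent is the natural way to turn the paper's claim into a proof.

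The genuine gap is exactly where you place it. Step~(c) requires showing that rotating $\mathbf{x}_c$ about $\mathbf{x}_m$ by the angle $\gamma=\angle(\vec{v}_i,\vec{v}_m)$ (note: the paper uses $\vec{v}_i$, not the bisector $\vec{v}$, in step~(v)) does not overshoot $\alpha^{*}$. Your sketch (``aligning the next $\vec{v}_m$ with the current $\vec{v}_i$ keeps it strictly inside the cone\ldots'') is plausible but not established: after the rotation, $\vec{v}_i$ and $\vec{v}_f$ themselves change, so the cone is a moving target, and one must argue that the new misalignment has the same sign and smaller magnitude. Unimodality of $g$ (which follows from the single critical point shown in Lemma~\ref{lemma:convergence}) gives monotone ascent only if the step stays on the same side of $\alpha^{*}$; it does not by itself preclude an oscillation that still increases $g$ but fails to converge, or a step that jumps past $\alpha^{*}$ to a lower value. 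This is the substantive point neither the paper nor your proposal resolves; everything else in your outline is sound.
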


\begin{remark}[Eliminating the distance constraint]
\label{rem:relaxing}
The $4R_{\min}$ distance constraint in Problem \ref{prob:three_point} ensures that the path types are of type $C_1S_2C_3S_4C_5$. Eliminating the distance constraint introduces additional path types, i.e., paths including $CC$ and $CCC$ segments. The proof of optimality for Lemma \ref{lemma:convergence} does not consider any distance constraint between the points. Therefore, the iterative method is applicable to any $C_1S_2C_3S_4C_5$ path type even when $4R_{\min}$ is not satisfied. Also, a constant time method is presented in \cite{isaiah2015motion} for computing $CC$ paths. Implementing the method for computing $CC$ paths alongside our iterative method for $C_1S_2C_3S_4C_5$ paths, we obtain a method to optimally find the heading at the mid point for all path types between three consecutive points with exception of $\{C_1C_2C_3S_4C_5, C_1S_2C_3C_4C_5, C_1C_2C_3C_4C_5\}$. Although these path types are not considered in our method, the extensive simulation results in Section \ref{sec:simulations} show that under the relaxed distance condition the paths generated by our method are in $0.1$ percent of the optimal path. \oprocend
\end{remark}

\section{Locally Optimizing a Dubins TSP Tour}
\label{sec:DubinsTSP}

The solution to Problem~\ref{prob:three_point} provides a method for locally optimizing a Dubins TSP tour in a post-processing phase. Given a set of $n$ points in the Euclidean plane, a solution to the Dubins TSP is an ordering of the $n$ points, along with a heading at each point that minimizes the total path length. Let $T$ be a Dubins tour such that $T_i$ is the $i$th configuration $(\mathbf{x}_i, \alpha_i)$.
Now we define our post-processing method as follows: 
\begin{enumerate}
\item For every $T_i$, solve the problem $(T_{i-1}, \mathbf{x}_i, T_{i + 1})$ and update $\alpha_i$,
\item Randomly delete a configuration $T_i$ in $T$ and re-insert to a position in the tour with minimum additional cost.
\end{enumerate}
Note that every segment of three consecutive vertices on the tour is a $(X_i, \mathbf{x}_m, X_f)$ problem instance. Therefore, in a tour of length $n$, finding the position to insert a point with minimum additional cost requires solving $n-1$ problem instances of type $(X_i, \mathbf{x}_m, X_f)$. 
The steps (i) and (ii) of refinements  terminates if there is no improvement in the path. 

\section{Simulation Results} 
\label{sec:simulations}
We evaluate the performance of the proposed approach on both randomly generated $(X_i, \mathbf{x}_m, X_f)$ instances and in post-processing Dubins TSP tours as in Section \ref{sec:DubinsTSP}. The point-to-point Dubins path \cite{shkel2001classification} and the three-point Dubins method are implemented in Python and the experiments are conducted on an Intel Corei5 @2.5Ghz processor. The experiments in this section consider a Dubins vehicle with $R_{\min} = 1$.
 
\subsection{Three-Point Dubins} \label{sec:random_instances}

In this section we compare performance of the initial approximation and the iterative method to discretizing the heading at $\mathbf{x}_m$ with 360 equally-spaced headings.  Let $\alpha_d$ be a heading among the discretized headings. The discretization method creates the configuration $X_m = (\mathbf{x}_m, \alpha_d)$, and solves two Dubins path problems, namely $(X_i, X_m)$ and $(X_m, X_f)$. The discretization method returns the minimum path among the headings. 

Figure \ref{fig:distrib1} (top) shows the percentage deviation in path length for the approximate heading $\bar \alpha$ in~\eqref{eq:approx_heading}, relative to the path length computed using $360$ heading discretizations. The bottom figure shows the deviation of the path length produced by the iterative method to that of the discretized heading. The experiments are conducted on $50000$ random $(X_i, \mathbf{x}_m, X_f)$ instances, where the points are uniformly randomly selected in a $10 \times 10$ environment. The $x$-axis in Figure \ref{fig:distrib1} is the rounded minimum distance of the three points. For example, $1$ on the x-axis represent the instances where the minimum distance between the points is in interval $[1, 1.5)$.
The negative values represent instances in which the proposed methods outperform the discretization method. 
The distribution shows that even in the cases where points are less than $4R_{\min}$ apart from each other, the iterative method generates shorter paths. 

The average computation time may vary based on the distances of the points due to considering additional path types mentioned in Remark \ref{rem:relaxing}. The iterative method improves the runtime of computing a three-point Dubins path, under $4R_{\min}$ distance constraint, compared to $360$ discretization by a factor of $13.65$. However, this factor of improvement is $5.21$ for the instances with points less than $2R_{\min}$ apart.  Table \ref{time_ratios} shows the factor of improvement in runtime of the iterative and approximation method when compared to discretization with $360$ headings.         

\begin{figure}
	\begin{subfigure}{\linewidth}
	\includegraphics[width=\linewidth, keepaspectratio=true]{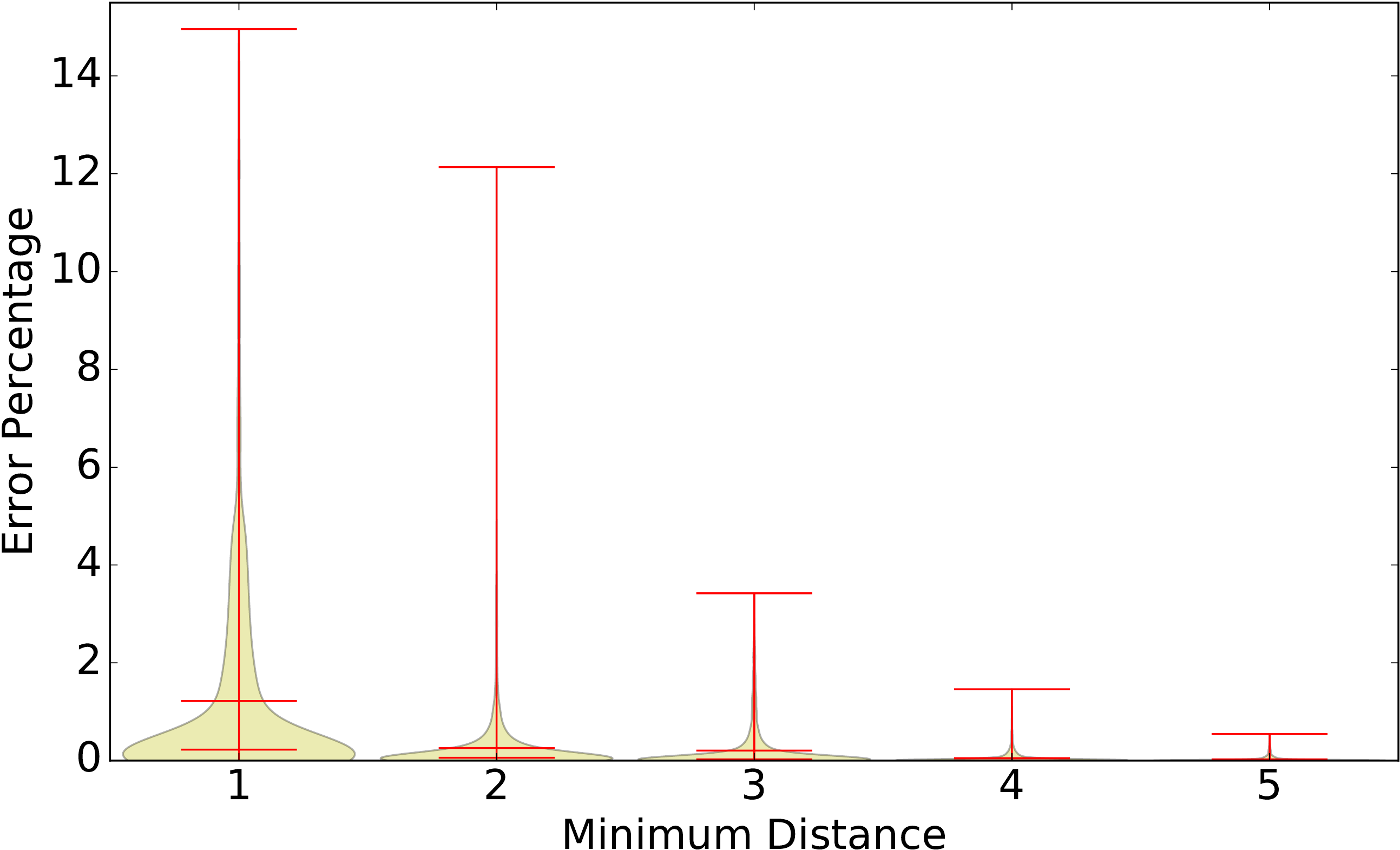}
	\label{fig:distrib2}
	\end{subfigure}
	\begin{subfigure}{\linewidth}
	\includegraphics[width=\linewidth, keepaspectratio=true]{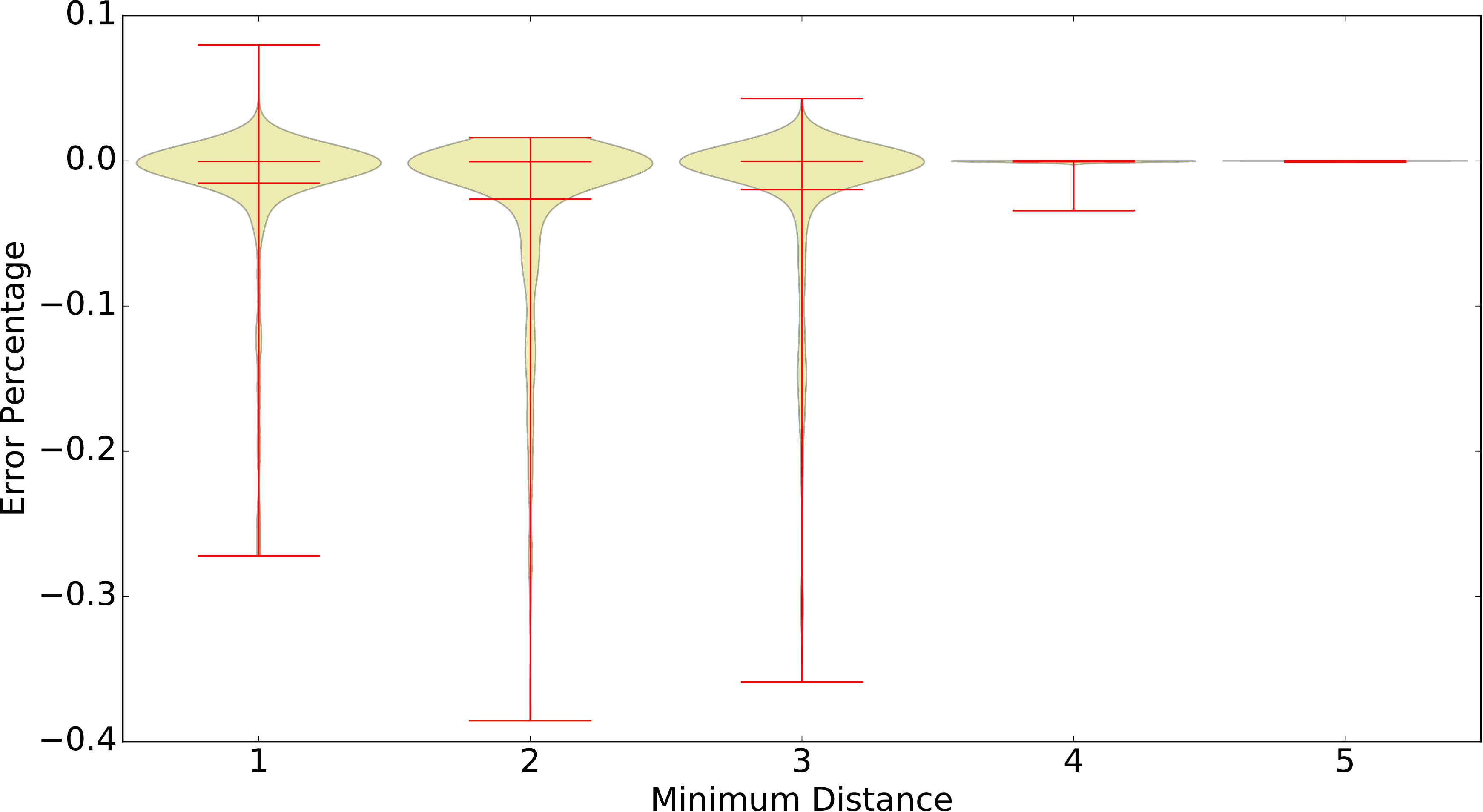}
	\label{fig:distrib}
	\end{subfigure}
	\caption{The percentage deviation of the length of paths  generated by the approximation method (top) and iterative method (bottom) from the discretization method with $360$ equally spaced headings. The width of the distributions represent the probability of occurring an instance in the corresponding difference percentile.} 
	\label{fig:distrib1}
\end{figure}

\begin{table} 
\small
\centering
\begin{tabular}{@{} lrrr @{}}
\toprule
&\multicolumn{1}{c}{$2R_{\min}$} &\multicolumn{1}{c}{$3R_{\min}$} &\multicolumn{1}{c}{$4R_{\min}$} \\
\midrule
Approx. heading 				&  $65.3$ 	& $67.1$ 		& $74.2$	\\
Iterative method  				&  	$5.2$ 	& $6.8$  		& $13.6$\\
\bottomrule
\end{tabular}
\caption{The factor of improvement in runtime of the iterative and approximation method over $360$ discrete headings. The average solver time on $10000$ instances for the discretization method with $360$ equally spaced headings is $0.01728$ seconds.}
\label{time_ratios}
\end{table}

\subsection{Post-processing on Dubins Tour} 
\label{sec:random_Dubins}
In this experiment, we implement the GTSP method \cite{le2012dubins} on random instances with various discretization levels followed by our post-processing method in Section \ref{sec:DubinsTSP}. Given a Dubins TSP on $n$ points, and a discretization level of $d$ at each point, the GTSP instance will have $nd$ vertices. The results show the advantages of the local optimization on GTSP solutions with coarse discretization over solving GTSP with fine discretization. 

To characterize the performance of our algorithm, we conduct experiments on low and high-density Dubins TSP instances. Table~\ref{tab:high_dense} shows the results on uniformly randomly generated instances.  Each row of the table is a class of $20$ random instances with the same problem parameters: that is, the environment size $W$ $\times$ $W$, the number of points $N$, and the minimum pair-wise distance $D$.

The GTSP instances are solved using the state-of-the-art GTSP solver, GLKH~\cite{helsgaun2015solving} which is implemented in $\mathtt{C}$. In Table \ref{tab:high_dense} the abbreviations G.\ Len and G.\ Time represent the average tour length and solver time, in seconds, for the GTSP solver. Similarly, P.\ Len represents the average tour length after post-processing and P.\ Time represents the time required for the post-processing of the GTSP tour. The total time of the GTSP approach and the post-processing is denoted by Time.
Table \ref{tab:high_dense} (top) shows the performance of the post-processing technique on the GTSP tours with a discretization level of $1$ and $10$ in low-density Dubins TSP instances. The time and the tour length of the GTSP solution with discretization level $20$ is the reference, denoted by ref, for evaluating the performance of the post-processing method. The table (top) includes the ratios of the total time and post-processed tour length to the reference. In the class of instances N30W20D2.0, the deviation of the post-processed tour length from the reference is $3.7\%$ and the total time of solving the GTSP with $1$-discretization followed by the post-processing technique is just $1\%$ of the solver-time of the GTSP approach with discretization level $20$.  

In an environment with high density of points, the discretization level has larger impact on the ordering of the points in a GTSP solution.  Table~\ref{tab:high_dense} (bottom) shows the results of the GTSP tour with post-processing on high-density instances. For example, the results on the class of instances N50W20D0.0 show that the tour length of the post-processed GTSP tour with discretization level $5$ is $5.3\%$ longer than the GTSP tour with discretization level $20$.  However, the runtime is improved by a factor of $13.26$.

\begin{table*}[htb] 
\centering
\begin{tabular}{@{} cl rrrrrrrrrrrrr @{}}
\toprule
 & & P. Time & \multicolumn{5}{c}{GTSP 1-discretization} & \vphantom{a} &\multicolumn{5}{c}{GTSP 10-discretization} & \vphantom{a} \\
 \cmidrule{4-8} \cmidrule{10-14}
& &  & G. Time & G. Len & P. Len & $\text{P. Len/ref}$ & $ \text{Time/ref}$ && G. Time & G. Len & P. Len & $\text{P. Len/ref}$ & $\text{Time/ref}$
\\

\midrule
\multirow{7}{*}{\rotatebox{90}{Low Density}} & N10W15D4.0 	& 	0.1 	& 	0.0 	&  	75.5   & 54.6 & 	1.000 & 	0.02 && 0.8	& 	54.8 & 54.6 	& 	1.000 & 	0.17 \\
									& N10W10D3.0 	& 	0.1 	& 	0.0 	&  	66.0   & 38.2 & 1.003	& 	0.03	 && 0.6		& 	38.4 		& 	38.1 	& 1.000	& 	0.18\\
									& N20W20D3.0 	& 	0.3 	& 	0.0 	&  	142.5   	& 94.5 &  1.002	& 	0.01	&& 9.5	& 	94.7 	&  94.5 & 1.002 & 0.26 \\
									& N30W20D2.0 	& 	0.5 	& 	0.4 	&  	187.3   & 110.3 &  1.037	& 	0.01	 && 18.4	& 	107.1 & 106.4 	& 	1.000 & 	0.18 \\
									& N30W30D3.0 	& 	0.7 	& 	0.1 	&  	229.0   	& 157.6 &  1.006	& 	0.01 	&& 20.5	& 	157.1 	&  156.7 	& 	1.000 &  0.15	\\
									& N40W30D4.0 	& 	1.6 	& 	0.3 	&  	289.7   & 205.3 & 1.022 & 	0.01	&&  45.8	& 	201.3 & 200.7 	& 	1.000 & 	0.16 \\
\bottomrule
\end{tabular}
\vskip1em
\begin{tabular}{@{} cl rrrrrrrrrrrrr @{}}
\toprule
& & P. Time & \multicolumn{3}{c}{GTSP 5-discretization} & \vphantom{a} &\multicolumn{3}{c}{GTSP 10-discretization} & \vphantom{a} & \multicolumn{3}{c}{GTSP 20-discretization}\\
 \cmidrule{4-6} \cmidrule{8-10} \cmidrule{12-14}
& &  & G. Time & G. Len & P. Len && G. Time & G. Len & P. Len && G. Time & G. Len & P. Len
\\
\midrule
\multirow{7}{*}{\rotatebox{90}{High Density}}& N10W5D0.0 	& 	0.3 	& 	1.5 	&  	28.0   & 25.1 && 2.4	& 	24.6 & 22.9 	&& 6.5 & 21.4 & 21.2	&\\
										& N20W5D0.0 	& 	0.6 	& 	7.5 	&  	44.6   & 41.8 && 16.5	& 	38.6 & 36.4 && 46.2 & 35.1	& 34.8\\
										& N30W5D0.0 	& 	2.3 	& 	24.5 	&  	58.2   & 54.5 && 39.2	& 	52.5 & 50.9 && 121.4 & 46.6 & 46.0\\
										& N30W20D0.0 	& 	0.7 	& 	3.6 	&  	103.0   & 97.5 && 16.0	& 	97.0 & 93.2 && 67.3	& 	95.4& 92.3\\
										& N40W5D0.0 	& 	2.1 	& 	39.0 	&  	71.8   & 70.2 && 77.9	& 	65.5 & 63.2 && 222.4 & 61.4  & 61.1&\\
										& N50W20D0.0 	& 	0.8 	& 	23.2 	&  	153.5   & 145.2 &&  63.3	& 	141.7 & 137.9&& 318.0  & 137.7 & 136.9\\
\bottomrule
\end{tabular}
\caption{Average tour length and time of the GTSP approach compared to the post-processing method on random instances with low-density of points (top table) and high-density (bottom table). The instance names consist of the environment size $W$ $\times$ $W$, the number of points $N$, and the minimum pair-wise distance $D$.
}
\label{tab:high_dense}
\end{table*}

\section{Conclusion and Future Work}
This paper considers the optimal Dubins path through three consecutive points.  The presented approximation method followed by the iterative method show improvement in run-time compared to the discretization of the headings. In addition to the experimental results, the application of inversive geometry provides a direction for further research. In addition, we hope to extend the analysis to path types containing $CCC$ segments, as described in Remark~\ref{rem:relaxing}.

\appendices
\section{Proof of Results} 
\label{appendix:1}

To prove Proposition \ref{prop:max_err_init_guess_1} we require the following result.
\begin{lemma}[Minimum radius of an inverted circle] \label{lem:min_rad}
The radius of the inverted  circles $\mathrm{circle}(C, R)$ and $\mathrm{circle}(D, R)$ in the optimal path are greater than $\frac{R_{\min}}{4}$.
\end{lemma}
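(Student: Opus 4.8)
The plan is to reduce the claim to the elementary description of how circle inversion acts on a line. Recall that if $\ell$ is a line at distance $h>0$ from the center $O$ of a circle of radius $\rho$, then $\mathrm{inv}(\ell,\mathrm{circle}(O,\rho))$ is a circle through $O$ whose diameter is the segment joining $O$ to the inverse of the foot of the perpendicular from $O$ to $\ell$; hence its radius equals $\rho^{2}/(2h)$. (This follows directly from inverting three points of $\ell$ as in Section~\ref{sec:background}, or from the polar equation of the image curve.) First I would apply this with $O=\mathbf{x}_m$, $\rho=R_{\min}$ and $\ell=S_2$, so that the radius $R$ of $\mathrm{inv}(S_2,\mathcal{C})=\mathrm{circle}(C,R)$ is $R=R_{\min}^{2}/(2h)$ with $h=\mathrm{dist}(\mathbf{x}_m,S_2)$; the argument for $S_4$ and $\mathrm{circle}(D,R)$ is identical, and by Lemma~\ref{lem:Radius_inverted_circles} it yields the same value.

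Next I would compute $h$ from the geometry of the $C_1S_2C_3S_4C_5$ path. The segment $S_2$ is a common tangent of $\mathrm{circle}(A,R_{\min})$ and $\mathrm{circle}(\mathbf{x}_c,R_{\min})$, hence tangent to $\mathrm{circle}(\mathbf{x}_c,R_{\min})$ at the endpoint $T_2$ of the arc $C_3$, while $\mathbf{x}_m$ lies on $\mathrm{circle}(\mathbf{x}_c,R_{\min})$. Using the equal–arc property of the optimal path (Lemma~\ref{lem:bisect} and~\cite{goaoc2013bounded}), $\mathbf{x}_m$ is the arc–midpoint of $C_3$; writing $2\phi$ for the angular extent of $C_3$, the point $T_2$ is at angular distance $\phi$ from $\mathbf{x}_m$ along $\mathrm{circle}(\mathbf{x}_c,R_{\min})$, and a short computation gives
\[
h=\mathrm{dist}(\mathbf{x}_m,S_2)=R_{\min}\bigl(1-\cos\phi\bigr),\qquad R=\frac{R_{\min}}{2(1-\cos\phi)}.
\]
Since a shortest Dubins path contains no circular arc of length $2\pi R_{\min}$ (such a loop could be removed while preserving the endpoints and headings), we have $\phi\in(0,\pi)$, so $1-\cos\phi\in(0,2)$ and therefore $R>R_{\min}/4$. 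The bound for $\mathrm{circle}(D,R)$ follows verbatim with the same $\phi$, consistent with Lemma~\ref{lem:Radius_inverted_circles}.

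The step I expect to require the most care is the strictness, i.e.\ excluding $\phi=\pi$. One must argue that $\phi=\pi$ forces $C_3$ to be a full turn, which cannot occur on a shortest path. If one prefers not to invoke ``no full loops'' directly, an alternative is to use Lemma~\ref{lem:Radius_inverted_circles}: the equality $h=2R_{\min}$ forces $\mathbf{x}_m$ to be the antipode on $\mathrm{circle}(\mathbf{x}_c,R_{\min})$ of the tangency point $T_2$ of $S_2$, and the equal–radius conclusion applied to $S_4$ forces $\mathbf{x}_m$ to be the antipode of the tangency point $T_4$ of $S_4$ as well; uniqueness of the antipode then gives $T_2=T_4$, so $S_2$ and $S_4$ coincide, which by Lemma~\ref{lem:special_case_p} collapses $C_3$ to a point and contradicts $\mathbf{x}_m$ being the antipode of $T_2$. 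Everything else — the inversion radius formula and the bound $h\le 2R_{\min}$ coming from ``a point of a circle is at distance at most the diameter from any tangent line'' — is routine.
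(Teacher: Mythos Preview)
Your proof is correct and shares the paper's core idea: both arguments use the inversion-of-a-line formula $R=R_{\min}^{2}/(2h)$ and then bound the perpendicular distance $h=\mathrm{dist}(\mathbf{x}_m,S_2)$ by $2R_{\min}$, exploiting that $\mathbf{x}_m$ lies on $\mathrm{circle}(\mathbf{x}_c,R_{\min})$ while $S_2$ is tangent to that same circle. The paper obtains $h\le 2R_{\min}$ in one line via the triangle inequality and stops there; you instead parametrize by the half-arc angle $\phi$ of $C_3$ (using the equal-arc property from~\cite{goaoc2013bounded}) to get the exact value $h=R_{\min}(1-\cos\phi)$. Your route buys two things the paper's does not: an explicit closed form for $R$ in terms of the arc, and a clean argument for the \emph{strict} inequality $R>R_{\min}/4$ via ``no full loops'' (the paper's own proof in fact concludes only $R\ge R_{\min}/4$, so the strictness asserted in the lemma statement is left unaddressed there). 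The cost is a mild extra dependence on the equal-arc property, which the paper's bare triangle-inequality bound does not require.
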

\begin{proof}
Referring to Figure~\ref{fig:inversion_proof}, the line segments $S_2$ and $S_4$ are tangent to $\mathrm{circle}(\mathbf{x}_c, R_{\min})$.  The inverse of $S_2$ and $S_4$
with respect to $\mathcal{C} = \mathrm{circle}(\mathbf{x}_m, R_{\min})$ are the circles $\mathrm{circle}(C, R)$ and $\mathrm{circle}(D, R)$, respectively.  For any point $P$ in $S_2$ ($S_4$) the distance $|\overline{\mathbf{x}_mP}|$ is inversely proportional to $|\overline{\mathrm{inv}(P, \mathcal{C})\mathbf{x}_m}|$. Note that the closest point to $\mathbf{x}_m$ on the line segments $S_2$ and $S_4$, results in the farthest point from $\mathbf{x}_m$ in $\mathrm{circle}(C, R)$ and $\mathrm{circle}(D, R)$. Therefore, the radius $R$ is minimum if the minimum distance of the point $\mathbf{x}_m$ from either of the line segments $S_2$ or $S_4$ is maximum.  Since the point $\mathbf{x}_m$ lies on$\mathrm{circle}(\mathbf{x}_c, R_{\min})$, by the triangle inequality, the minimum distance of the point $\mathbf{x}_m$ from the line segments $S_2$ and $S_4$ is at most $2R_{\min}$.  Thus, the farthest distance of $\mathbf{x}_m$ from a point on $\mathrm{circle}(C, R)$ and $\mathrm{circle}(D, R)$ is at most $\frac{1}{2}R_{\min}$ which implies the radius $R$ is at least $\frac{1}{4}R_{\min}$.
\end{proof} 

\begin{proof}[Proof of Proposition \ref{prop:max_err_init_guess_1}]

For a path contained in $\{RSRSR, LSLSL, RSLSR, LSRSL\}$, 
Equations \eqref{eq:trig_1}-\eqref{eq:trig_3} simplify to the following:
\begin{equation} \label{eq:simply_equtions}
|\overline{A\mathbf{x}_m}| \cos(\beta_1 - \theta + \alpha) = |\overline{B\mathbf{x}_m}| \cos(\beta_2 + \theta + \alpha).
\end{equation}
With the assumption of case (i) ($|\overline{A\mathbf{x}_m}| = |\overline{B\mathbf{x}_m}|$) and by Equation \eqref{eq:simply_equtions}, the optimal
heading $α$ is equal to the approximated heading, namely $\alpha = \frac{\beta_1 - \beta_2}{2}$. 

Case (ii): Under the distance constraint of Problem \ref{prob:three_point} ($4R_{\min}$), the approximated heading has maximum deviation from the optimal heading when the radius $R$ reaches infinity or its lower bound from Lemma \ref{lem:min_rad}, namely $\frac{1}{4}$. For the
path types $RSRSR$ and $LSLSL$, substituting the minimum and maximum $R$ values in the
Equations \eqref{eq:trig_1} and $\eqref{eq:trig_2}$ result the following:
\begin{align*}
-\alpha + \beta_1 + \theta &\in \left[\cos^{-1}\Big(\frac{1}{3}\Big), \cos^{-1}\Big(\frac{-1}{3}\Big)\right], \\
\alpha + \beta_2 + \theta &\in \left[\cos^{-1}\Big(\frac{1}{3}\Big), \cos^{-1}\Big(\frac{-1}{3}\Big)\right].
\end{align*}
Therefore, the maximum deviation of the approximated heading from the optimal heading is in
the range $[\frac{-\pi}{9} , \frac{\pi}{9}]$. 

Case (iii): using similar argument:
\begin{align*}
-\alpha + \beta_1 + \theta &\in \left[\cos^{-1}(1), \cos^{-1}\Big(\frac{1}{3}\Big)\right],\\
\alpha + \beta_2 + \theta &\in \left[\cos^{-1}(1), \cos^{-1}\Big(\frac{1}{3}\Big)\right].
\end{align*}
Therefore, the optimal heading lies within the interval $[-\frac{\pi}{5}, \frac{\pi}{5}]$ centered at the approximated heading.

Case (iv): We prove the bound for path types in $\{RSRSL,LSLSR\}$ and the other path types directly follow same analysis. In these types of paths, substituting the upper and lower bounds of $R$ in Equations \eqref{eq:trig_1}-\eqref{eq:trig_3} yields the following:
\[\frac{-1}{3}\leq \cos(\beta_1 + \theta - \alpha) \leq \frac{1}{3},\]
\[\frac{1}{3}\leq \cos(\beta_2 + \theta + \alpha) \leq 1.\]
Thus, 
\[\beta_1 + \theta - \alpha \in \left[\frac{\pi}{2} - \frac{\pi}{9} ,\frac{\pi}{2} + \frac{\pi}{9}\right],
\quad
\beta_2 + \theta + \alpha \in \left[0 ,\frac{\pi}{2} - \frac{\pi}{9}\right].\]
Therefore, the maximum error of $\bar \alpha$ is $\frac{11\pi}{36}$.
\end{proof}
 
\bibliographystyle{IEEEtran}
%\bibliography{references}

% Generated by IEEEtran.bst, version: 1.14 (2015/08/26)

\end{document}